\title{Algorithms and Bounds for Very Strong Rainbow Coloring}
\author{L. Sunil Chandran\inst{1} \and Anita Das\inst{2} \and Davis Issac\inst{3} \and Erik Jan van Leeuwen\inst{4}}
\institute{Department of Computer Science and Automation, Indian Institute of Science, Bangalore, India, \email{sunil@csa.iisc.ernet.in}. \and Infosys Ltd., \email{anita\_das01@infosys.com}. \and MPI f\"ur Informatik, Saarland Informatics Campus, Saarbr\"ucken, Germany, \email{dissac@mpi-inf.mpg.de}. \and Department of Information and Computing Sciences, Utrecht University, The Netherlands, \email{e.j.vanleeuwen@uu.nl}.}
\spnewtheorem{observation}[theorem]{Observation}{\bfseries}{\itshape}
\newcommand{\U}{\mathcal{U}}
\newcommand{\F}{\mathcal{F}}
\newcommand{\C}{\mathcal{C}}
\newcommand{\vsrc}{\mathbf{vsrc}}
\newcommand{\cvsrc}{VSRC}
\newcommand{\src}{\mathbf{src}}
\newcommand{\rc}{\mathbf{rc}}
\newcommand{\tw}{\mathbf{tw}}
\newcommand{\tvsrc}{{\sc 3-Vsrc}}
\newcommand{\twovsrc}{{\sc 2-Vsrc}}
\newcommand{\kvsrc}{{\sc $k$-Vsrc}}
\newcommand{\twocolor}{{\sc 2-Coloring}}
\newcommand{\cp}{\mathbf{cp}}
\newcommand{\sat}{{\sc Sat}}
\newcommand{\et}{E_{\mathsf{bridge}}}
\newcommand{\eeven}{E_{\mathsf{even}}}
\newcommand{\eodd}{E_{\mathsf{odd}}}
\newcommand{\eopp}{E_{\mathsf{opp}}}
\newcommand{\erem}{E_{\mathsf{rem}}}
\newcommand{\etal}{et al.}
\newcommand{\order}{\mathcal{O}}
\newcommand{\isn}{\mathbf{is}}
\newcommand{\ecc}{\mathbf{ecc}}
\newcommand{\vopp}{\mathsf{vopp}}
\newcommand{\eeopp}{\mathsf{eopp}}
\newcommand{\os}{\mathsf{OS}}
\newcommand{\g}{\mathsf{g}}
\newcommand{\subg}{\mathsf{S}}
\begin{document}

\maketitle

\begin{abstract}
A well-studied coloring problem is to assign colors to the edges of a graph $G$ so that, for every pair of vertices, all edges of at least one shortest path between them receive different colors. The minimum number of colors necessary in such a coloring is the strong rainbow connection number ($\src(G)$) of the graph. When proving upper bounds on $\src(G)$, it is natural to prove that a coloring exists where, for \emph{every} shortest path between every pair of vertices in the graph, all edges of the path receive different colors. Therefore, we introduce and formally define this more restricted edge coloring number, which we call \emph{very strong rainbow connection number} ($\vsrc(G)$). 

In this paper, we give upper bounds on $\vsrc(G)$ for several graph classes, some of which are tight. These immediately imply new upper bounds on $\src(G)$ for these classes, showing that the study of $\vsrc(G)$ enables meaningful progress on bounding $\src(G)$. Then we study the complexity of the problem to compute $\vsrc(G)$, particularly for graphs of bounded treewidth, and show this is an interesting problem in its own right. We prove that $\vsrc(G)$ can be computed in polynomial time on cactus graphs; in contrast, this question is still open for $\src(G)$. We also observe that deciding whether $\vsrc(G) = k$ is fixed-parameter tractable in $k$ and the treewidth of $G$. Finally, on general graphs, we prove that there is no polynomial-time algorithm to decide whether $\vsrc(G) \leq 3$ nor to approximate $\vsrc(G)$ within a factor $n^{1-\varepsilon}$, unless P$=$NP. 


\end{abstract}

\section{Introduction} \label{sec:intro}
The chromatic number is one of the most widely studied properties in graph theory. It has inspired a wealth of combinatorial and algorithmic results, as well as a host of variants. A variant that has recently attracted much interest is the \emph{rainbow connection number} of a graph, which is an edge coloring property introduced by Chartrand~\etal~\cite{chartrandrainbow} in 2008. Formally, the rainbow connection number $\rc(G)$ of a graph $G$ is the smallest number of colors needed such that there exists a coloring of $E(G)$ with these colors such that, for every pair of vertices, there exists at least one path $P$ between them, such that all edges of $P$ receive different colors. We also say that this path $P$ is \emph{rainbow colored}. The rainbow connection number has attracted much attention, and the exact number is known for a variety of simple graph classes~\cite{chartrandrainbow,chandran2012rainbow,sunil2012rainbow} and the complexity of computing this number was broadly investigated~\cite{ananth2011rainbow,basavaraju2014rainbow,chakraborty2011hardness,chandran2012rainbow,chandran2013inapproximability}. See also the surveys by Li~\etal~\cite{srcli,li2012rainbow,li2017}. Most recently, in ESA 2016, it was shown that for any $k \geq 2$, deciding whether $\rc(G)\le k$ (\textsc{$k$-Rc}) cannot be solved in $2^{o(n^{3/2})}$ or $2^{o(m/\log m)}$ time, where $n = |V(G)|$ and $m = |E(G)|$, unless ETH fails~\cite{Lauri-esa}.

To prove an upper bound on $\rc(G)$, the choice of the path $P$ that is rainbow colored is crucial. The analysis would seem simpler when we are able to choose $P$ as a shortest path between its two endpoints. This leads to the definition of the \emph{strong rainbow connection number} of a graph. Formally, the strong rainbow connection number $\src(G)$ of a graph $G$ is the smallest number of colors needed such that there exists a coloring of $E(G)$ with these colors such that, for every pair of vertices, there exists at least one \emph{shortest} path $P$ between them, such that all edges of $P$ receive different colors. Clearly, $\src(G) \geq \rc(G)$, and both parameters are at least the diameter of $G$. Moreover, $\rc(G) = 2$ if and only if $\src(G) = 2$~\cite{chakraborty2011hardness}. Nontrivial upper bounds on $\src(G)$ are known for several simple graph classes such as cycles, wheels, and complete bipartite graphs~\cite{chartrandrainbow} and block graphs~\cite{lauri2016chasing}. It is also known that deciding whether $\src(G) \leq k$ (\textsc{$k$-Src}) is NP-hard even for $k=2$~\cite{chakraborty2011hardness}. The problem of deciding whether $\src(G) \leq k$ remains NP-complete even for bipartite graphs and split graphs~\cite{ananth2011rainbow,keranen2014}. In fact, $\src(G)$ cannot be approximated in polynomial time within a factor $n^{1/2-\varepsilon}$ for any $\varepsilon > 0$, unless P$=$NP, even for split and bipartite graphs~\cite{ananth2011rainbow,keranen2014}.%
\footnote{\cite{ananth2011rainbow} and \cite{keranen2014} mention NP$\neq$ZPP as the complexity assumption but one can use P$\neq$NP because of \cite{zuckerman2006linear}.}

The lack of combinatorial bounds on $\src(G)$ for specific graph classes $G$ (the recent survey by Li and Sun~\cite{li2017} cites only three papers) is somewhat surprising compared to the vast literature for $\rc(G)$ (see the surveys~\cite{srcli,li2012rainbow,li2017}). Li and Sun~\cite{li2017} explain this by the fact that $\src(G)$ is not a monotone graph property, and thus investigating $\src(G)$ is much harder than investigating $\rc(G)$. Hence, it is a major open question to prove upper bounds on $\src(G)$.

In this paper, we make significant progress on this question. We observe that to prove upper bounds on $\src(G)$, it suffices to prove the existence of a coloring where all edges of not just one, but of \emph{all} shortest paths between two vertices receive different colors. Therefore, we define the \emph{very strong rainbow connection number} $\vsrc(G)$ of a graph $G$, which is the smallest number of colors for which there exists a coloring of $E(G)$ such that, for every pair of vertices and \emph{every} \emph{shortest} path $P$ between them, all edges of $P$ receive different colors. We call a coloring that achieves this property a \emph{very strong rainbow coloring} of the graph. We also call the problem of deciding whether $\vsrc(G) \leq k$ the {\kvsrc} problem. 

\paragraph{Our Results}
We prove the first combinatorial upper bounds on $\vsrc(G)$ for several graph classes. These immediately imply upper bounds on $\src(G)$ for the same graph classes. In particular, we show upper bounds that are linear in $|V(G)|$ (improving from the trivial bound of $|E(G)|$) if $G$ is a chordal graph, a circular arc graph, or a disk graph.
We also make progress on the following conjecture:
\begin{conjecture}[\cite{lauri2016chasing}]
For any connected graph $G$, $\src(G)\le |V(G)|-\chi(G)+1$ where $\chi(G)$ denotes the chromatic number of $G$.	
\end{conjecture}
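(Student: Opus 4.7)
The plan is to attack the conjecture through the stronger parameter $\vsrc(G)$, exploiting the inequality $\src(G) \le \vsrc(G)$ (a very strong rainbow coloring is, in particular, a strong rainbow coloring). If one can prove $\vsrc(G) \le |V(G)| - \chi(G) + 1$, the conjecture follows, and working with $\vsrc$ is attractive because it is a global property---every shortest path must be rainbow---so there is no need to pick a witnessing path per vertex pair. This eliminates the non-monotone case analysis that, as observed in the introduction, makes $\src$ notoriously hard.

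The first step I would try is a clique-compression reduction. Fix a maximum clique $Q$ of $G$ with $|Q| = \omega(G)$. The crucial observation is that no shortest path of $G$ can use two edges of $Q$, since any two vertices of $Q$ are at distance $1$. Hence we may safely assign a single common color to every edge inside $Q$ without violating the very strong rainbow property. In perfect graphs, where $\omega(G) = \chi(G)$, this reserves exactly one color for the clique, and it remains to color the edges outside $Q$ using $|V(G)| - \chi(G)$ further colors. A natural route is to contract $Q$ to a single vertex, obtain a graph $G'$ on $|V(G)| - \chi(G) + 1$ vertices, recursively produce a very strong rainbow coloring of $G'$, and lift it back to $G$ by copying, for each edge $uv$ with $u \in Q$, the color assigned to the corresponding edge of $G'$.

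To push beyond perfect graphs, where $\omega(G) < \chi(G)$, this compression loses $\chi(G) - \omega(G)$ colors, so one must locate a replacement structure whose size matches $\chi(G)$ and whose edges are similarly "shortest-path light". A candidate is the edge set of a longest directed path in an acyclic orientation witnessing $\chi(G)$ via the Gallai--Roy theorem: such a path has $\chi(G) - 1$ edges, and one could try to recolor them compactly while charging the remaining edges one-to-one to the remaining vertices.

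The hard part will be controlling distance distortion under these modifications. Contracting $Q$ can shrink distances in $G'$ strictly below those in $G$, so a shortest path of $G'$ lifted to $G$ might fail to be a shortest path there, and conversely, pairs of edges whose images in $G'$ are far apart may end up sharing a common shortest path of $G$. Certifying that $Q$ is \emph{isometric}---in the sense that contraction preserves the distance between every pair of vertices outside $Q$---seems to demand extra structure such as $Q$ being a separator, which holds for chordal graphs but fails in general. I expect that this obstacle is precisely why the conjecture at present admits only partial progress, class by class (for example, by combining a perfect-elimination ordering with the clique-compression argument on chordal graphs), while a uniform argument for arbitrary connected graphs remains elusive.
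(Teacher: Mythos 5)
You have not proved the statement, and neither does the paper: this is stated as a conjecture (due to Lauri), and the paper only establishes it for chordal graphs. Your proposal is candid about that, but the concrete route you sketch has a genuine gap even for that special case. The clique-compression step itself is sound (no shortest path contains two edges of a clique, so one color suffices inside $Q$), but the recursion via contracting $Q$ fails exactly where you suspect: contraction can strictly shrink distances between vertices outside $Q$, so two edges that lie on a common shortest path of $G$ need not lie on a common shortest path of the contracted graph $G'$, and a very strong rainbow coloring of $G'$ lifted back to $G$ can repeat a color on a shortest path of $G$. There is also an unaddressed accounting issue (how $\chi(G')$ relates to $\chi(G)$, and whether $G'$ stays in the class), and the Gallai--Roy idea for imperfect graphs is never developed into an argument. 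As written, the proposal does not yield the bound for any class of graphs.

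The paper's route for chordal graphs avoids contraction and recursion entirely. It first proves the one-shot bound $\vsrc(G)\le \ecc(G)$, the minimum number of cliques needed to cover all \emph{edges} of $G$ (equivalently, the minimum universe size of an intersection representation): color each edge by the name of a covering clique containing it; if two edges of a shortest path receive the same name, all four endpoints lie in one clique and the path can be shortcut. It then shows, by processing a perfect elimination order in reverse to build a clique tree, that a chordal graph has an edge clique cover of size at most $|V(G)|-\omega(G)+1$; since chordal graphs are perfect, $\omega(G)=\chi(G)$ and the conjectured bound follows for this class. If you want to salvage your approach, the repair is to replace ``contract and recurse'' by ``cover all edges with $|V(G)|-\omega(G)+1$ cliques and color by clique membership.'' For general graphs no such small edge clique cover exists, and the conjecture remains open.
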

We show that the conjecture holds for the class of chordal graphs in Lemma~\ref{cor:chordal}.

Conversely, we prove that a bound on $\vsrc(G)$ implies that $G$ should be highly structured: the neighborhood of every vertex can be partitioned into $\vsrc(G)$ cliques. For further details, we refer to Section~\ref{sec:combinatorial}.

In the second part of the paper, we address the computational complexity of {\kvsrc}. To start our investigation, we prove hardness results on general graphs.
\begin{theorem} \label{thm:hardness}
{\tvsrc} is NP-complete. Moreover, there is no polynomial-time algorithm that approximates $\vsrc(G)$ within a factor $|V(G)|^{1-\varepsilon}$ for any $\varepsilon >0$, unless P$=$NP.
\end{theorem}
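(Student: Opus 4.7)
The plan is to reduce from the clique-cover number of a graph via a one-vertex extension, and then lift the chromatic-number inapproximability of~\cite{zuckerman2006linear}. Given $H$, form $G$ by adding a vertex $v$ adjacent to every vertex of $V(H)$; note $|V(G)|=|V(H)|+1$ and $G[N_G(v)]=H$. The core claim I aim to establish is a tight link between $\vsrc(G)$ and the clique-cover number $\theta(H)=\chi(\overline{H})$:
\[
  \theta(H)\le \vsrc(G) \le \theta(H)+1,
\]
with equality $\vsrc(G)=\theta(H)$ whenever $\theta(H)\le 3$.

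The lower bound is immediate from the combinatorial observation of Section~\ref{sec:combinatorial}, applied at the universal vertex $v$: any very strong rainbow coloring partitions $N_G(v)=V(H)$ into $\vsrc(G)$ cliques of $G[N_G(v)]=H$, so $\vsrc(G)\ge\theta(H)$.

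For the upper bound I fix a clique partition $K_1,\dots,K_k$ of $H$ and colour the edges of $G$ as follows: each cross edge between $K_a$ and $K_b$ gets a pair colour $\phi(\{a,b\})$, while every within-clique edge of $K_a$ together with every edge $vv_i$ with $v_i\in K_a$ gets a shared internal colour $m(a)$. The requirement that every $v_j\in K_a$ sees a valid clique partition of $N_G(v_j)$ reduces to asking that the values $\{\phi(\{a,b\}):b\ne a\}\cup\{m(a)\}$ be pairwise distinct, for then each colour class at $v_j$ is contained either in a single $K_b$ or in $K_a\cup\{v\}$ and is therefore a clique in $G$. This is precisely a proper edge colouring of $K_k$ on the clique-indices with a prescribed missing colour $m(a)$ at vertex $a$. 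The arithmetic choice $\phi(\{a,b\})=(a+b)\cdot 2^{-1}\bmod k$, $m(a)=a$ realises it for odd $k$ (in particular $\phi(\{a,b\})=6-a-b$ for $k=3$); a dedicated ``flip per connected component of $H$'' argument handles $k=2$; and for even $k\ge 4$ a parity obstruction forces one extra colour, yielding the $+1$ slack. The main combinatorial obstacle is verifying that a cross edge between $K_a$ and $K_b$ can be assigned a single colour compatible with the per-vertex conditions at both endpoints simultaneously.

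Granted the lemma, the two theorem statements follow. Deciding $\chi(\overline{H})\le 3$ is NP-hard, so by the equivalence $\vsrc(G)\le 3\Leftrightarrow\theta(H)\le 3$ the problem \tvsrc is NP-hard; membership in NP is immediate from a polynomial-size coloring certificate. For the inapproximability, \cite{zuckerman2006linear} forbids a polynomial-time $|V(H)|^{1-\varepsilon}$-approximation of $\chi(\overline{H})=\theta(H)$, and since $|V(G)|=|V(H)|+1$ and $|\vsrc(G)-\theta(H)|\le 1$, any such approximation of $\vsrc(G)$ would translate, for sufficiently large $H$, into a $|V(H)|^{1-\varepsilon}$-approximation of $\theta(H)$, a contradiction.
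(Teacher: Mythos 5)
Your reduction is the same as the paper's: attach a universal vertex to $H$ (the paper's $\hat H$), get the lower bound $\vsrc(\hat H)\ge\cp(H)=\chi(\overline H)$ from Lemma~\ref{lem:groupable} applied at the universal vertex, and combine with the chromatic-number inapproximability of~\cite{zuckerman2006linear}. Where you genuinely diverge is the upper bound. The paper only proves $\vsrc(\hat H)\le\cp(H)(\cp(H)+1)/2$ (Lemma~\ref{vsrc-l-cp2}) together with the ad hoc special case $\cp(H)\le 3\Rightarrow\vsrc(\hat H)\le 3$ (Lemma~\ref{cp-3-vsrc}); the quadratic slack is the reason the paper needs the two-sided gap form of Zuckerman's theorem and a rescaling of $\varepsilon$. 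You instead prove the sharper bound $\vsrc(\hat H)\le\cp(H)+1$ by encoding the per-vertex constraints as a proper edge coloring of $K_{k+1}$ (clique indices plus the universal vertex, with $m(a)$ the color of the edge to the extra vertex), which is tight at $\chi'(K_{k+1})$; your round-robin formula for odd $k$ specializes at $k=3$ to exactly the coloring of Lemma~\ref{cp-3-vsrc}. This buys a cleaner inapproximability argument (additive slack $1$ instead of a quadratic blow-up, so no rescaling) and a stronger combinatorial statement. Two small points to tighten: (i) the step where you declare that ``each colour class at each vertex is a clique'' suffices for a very strong rainbow coloring is valid here only because $\hat H$ has diameter at most $2$, so every shortest path has at most two edges --- state this explicitly, since for general graphs this condition is only the necessary one from Lemma~\ref{lem:groupable}; (ii) the special $k=2$ argument you allude to is not actually needed for either conclusion, because $\cp(H)\le 2$ already yields $\vsrc(\hat H)\le\chi'(K_3)=3$, which is all the equivalence $\vsrc(\hat H)\le 3\Leftrightarrow\cp(H)\le 3$ requires.
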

This result implies that {\kvsrc} is not fixed-parameter tractable when parameterized by $k$, unless P$=$NP. In order to prove the theorem, we show a nontrivial connection to the clique partition number of a graph. 

We remark that, in contrast to the NP-complete \textsc{$2$-Rc} and \textsc{$2$-Src} problems, {\twovsrc} can be solved in polynomial time (see Section~\ref{sec:other_results} for the proof). Together with Theorem~\ref{thm:hardness}, this gives a dichotomy result for the complexity of {\kvsrc}.

\begin{proposition} \label{prp:two}
Let $G$ be any graph. Then {\twovsrc} can be decided in polynomial time.
\end{proposition}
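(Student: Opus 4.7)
The plan is to reduce $\twovsrc$ to a bipartiteness test on a suitable auxiliary (constraint) graph. The first observation is that if only two colors are available, then any rainbow-colored path has length at most $2$; hence $\vsrc(G) \le 2$ forces $\text{diam}(G) \le 2$ (and $G$ must be connected, else $\vsrc(G)$ is not defined). So the algorithm begins by computing the diameter in polynomial time and rejecting the instance whenever $\text{diam}(G) \ge 3$ or $G$ is disconnected.

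Assuming $\text{diam}(G) \le 2$, every non-trivial shortest path has length $1$ or $2$. Length-$1$ paths are rainbow trivially, so the only real requirement is that for every length-$2$ shortest path $u$-$w$-$v$ (equivalently, every triple $u,w,v$ with $uw, wv \in E(G)$ and $uv \notin E(G)$), the edges $uw$ and $wv$ receive different colors. I plan to encode these requirements in an auxiliary graph $H$ whose vertex set is $E(G)$ and in which two edges $e_1 = uw$ and $e_2 = wv$ of $G$ sharing an endpoint $w$ are adjacent iff $uv \notin E(G)$. By construction, a $2$-edge-coloring of $G$ is a very strong rainbow coloring if and only if, viewed as a vertex $2$-coloring of $H$, it is proper.

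Hence $\vsrc(G) \le 2$ holds precisely when $\text{diam}(G) \le 2$ and $H$ is bipartite. Both conditions can be tested in polynomial time: $H$ has $|V(H)| = |E(G)|$ vertices and at most $O(|V(G)| \cdot |E(G)|)$ edges (each vertex $w$ contributes at most $\binom{\deg(w)}{2}$ pairs), and a standard BFS decides bipartiteness in linear time in the size of $H$. This yields the claimed polynomial-time algorithm.

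There is no serious obstacle; the only point to be careful about is the equivalence in the second step, namely that \emph{every} shortest path between every pair of vertices gets captured by the constraints defining $H$. This is immediate because with only two colors the only shortest paths that can possibly fail to be rainbow are those of length exactly $2$, and each such path $u$-$w$-$v$ with $uv \notin E(G)$ produces exactly the adjacency $\{uw, wv\}$ in $H$; conversely, every adjacency in $H$ arises from such a shortest path. Therefore properness of the coloring on $H$ is precisely the very strong rainbow condition on $G$.
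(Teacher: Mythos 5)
Your proof is correct and follows essentially the same approach as the paper: the paper builds an auxiliary graph $G'$ on vertex set $E(G)$ with adjacency given by co-occurrence on some shortest path and observes that $\vsrc(G)\le k$ iff $G'$ is properly $k$-colorable, which for $k=2$ is exactly your bipartiteness test. Your preliminary diameter check merely makes explicit what the paper's general construction handles implicitly (a shortest path with three or more edges yields a triangle in the conflict graph), so the two arguments coincide.
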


We then study the complexity of determining $\vsrc(G)$ for graphs of bounded treewidth. This is a major open question also for $\src(G)$ and $\rc(G)$~\cite{lauri2016chasing}, which are only known to be solvable in polynomial time on graphs of treewidth~$1$. We mention that no results for graphs of higher treewidth are known, even for outerplanar or cactus graphs. However, for the slightly different problem of deciding whether an already given coloring forms a (strong) rainbow coloring of a given graph, a polynomial-time algorithm for cactus graphs and an NP-hardness result for outerplanar graphs are known~\cite{uchizawa2013rainbow}. With this in mind, we focus on cactus graphs and make the first progress towards understanding the complexity of rainbow coloring problems, in particular of computing $\vsrc(G)$, on graphs of treewidth~$2$ with the following result.

\begin{theorem} \label{thm:cactus}
Let $G$ be any cactus graph. Then $\vsrc(G)$ can be computed in polynomial time.
\end{theorem}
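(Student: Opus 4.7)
My plan is to design a dynamic programming algorithm on the block-cut tree of the cactus, exploiting the tightly controlled structure of shortest paths in cacti.

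I would start by analysing the shortest-path structure. In a cactus $G$, each shortest path between two vertices $u,v$ is obtained by tracing the unique path between the block of $u$ and the block of $v$ in the block-cut tree, following within each intermediate block a shortest arc of that block. Within a cycle block, the arc is unique except when the two ``portal'' vertices of the block are antipodal in an even cycle, in which case there are exactly two shortest arcs of equal length. This gives a precise description of the pairs of edges of $G$ that must receive different colours in any very strong rainbow coloring: exactly the pairs that lie on some common shortest path. In particular, for every cut vertex $v$ and every block $B$ containing $v$ one gets an explicit clique in this ``conflict graph'' by concatenating a shortest path from $v$ in each block at $v$, which yields the lower bound $\vsrc(G)\ge \sum_{B\ni v} d_B(v)$, where $d_B(v)$ is the eccentricity of $v$ in the subgraph consisting of $B$ and all blocks of $G$ whose block-cut-tree path to $v$ passes through $B$; separately, every cycle $C$ with $|C|\ge 4$ contributes the lower bound $\vsrc(G)\ge \lceil |C|/2\rceil$ since the restriction of any very strong rainbow coloring of $G$ to $C$ is itself one for $C$.

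I would then root the block-cut tree at an arbitrary vertex and perform a bottom-up dynamic program. For each subtree, the DP maintains the minimum number of colours needed to colour the subtree's edges together with a compact \emph{profile} summarising the colour assignment near the root of the subtree --- concretely, for each depth $d\le \text{diam}(G)$, which colours are used on the edges reachable from the root at depth $d$ via shortest paths. At each cut vertex, the profiles of the children subtrees are combined under the following cross-subtree constraint: if two edges in different children subtrees lie at depths $d_1$ and $d_2$ from the cut vertex and $d_1+d_2$ is realised as the length of a shortest path passing through the cut vertex, then they must receive different colours. Within each cycle block, the coloring additionally respects the ``circular'' conflict structure of the cycle, forcing $\lceil |C|/2\rceil$ colours on a cycle $C$ of length at least $4$.

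The main obstacle is keeping the profile polynomially bounded while still retaining enough information for optimal decisions. I expect this requires a structural lemma showing that in some optimal very strong rainbow coloring, the colours used at each depth from a cut vertex admit a succinct canonical representation that interacts in a well-behaved way with the lower bounds above. The handling of odd cycles of length at least $5$ is particularly delicate: the intra-cycle conflict graph is imperfect, so $\chi=\omega+1$ within the cycle, and the extra colour forced there must be propagated consistently when merging with the surrounding profiles. Once such a bounded profile and a matching construction are established, the DP has polynomially many cells, each computable in polynomial time, yielding the claimed polynomial-time algorithm for $\vsrc(G)$.
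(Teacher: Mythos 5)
Your proposal is a plan rather than a proof, and the central difficulty is left unresolved. You write that the DP profile must record, for each depth from the root of a subtree, which colours appear there, and you acknowledge that ``the main obstacle is keeping the profile polynomially bounded'' and that you ``expect this requires a structural lemma.'' That structural lemma is precisely the hard content of the theorem; without it the number of DP states is exponential (the number of colours can be $\Theta(|E(G)|)$, so a set of colours per depth is not a polynomial-size object). Nothing in the proposal indicates how to obtain a canonical representation, so the argument does not go through as written.

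There is also a concrete error in the lower-bound side of your sketch. You claim that concatenating shortest paths from a cut vertex $v$ into its incident blocks yields a clique in the conflict graph, giving $\vsrc(G)\ge \sum_{B\ni v} d_B(v)$. This fails for odd cycles: if $e$ is an edge of an odd cycle $C$ and $w$ is the vertex of $C$ equidistant from both endpoints of $e$, then no shortest path from any vertex hanging off $w$ uses $e$, so $e$ does not conflict with any edge in that subgraph (Lemma~\ref{opp-valid} of the paper). Consequently colours \emph{can} be reused across blocks in exactly this situation, your clique is not a clique, and any DP that forbids such reuse will overcount. The paper's route is quite different and essentially local, with no dynamic programming: it partitions $E(G)$ into bridges, even-cycle edges, and odd-cycle edges, and shows that bridges must all receive distinct colours, each even cycle needs exactly $|C|/2$ fresh colours (opposite pairs share a colour), and within each odd cycle the edges whose ``opposite vertex'' has degree~$2$ must be coloured with fresh colours whose optimal number is $|\erem\cap C|$ minus the size of a maximum matching in an auxiliary conflict graph of maximum degree~$2$; the remaining odd-cycle edges reuse, at no cost, a colour already present in their opposite subgraph. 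Your proposal would need to rediscover all of this structure inside the profile, and in particular the matching computation per odd cycle, which is not visible from the chromatic-number-of-the-cycle viewpoint you describe.
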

Our algorithm relies on an extensive characterization result for the behavior of very strong rainbow colorings on cactus graphs. Since a cactus graph consists of bridges, even cycles, and odd cycles, we analyze the behavior of any very strong rainbow coloring of the graph with respect to these structures. We show that color repetition can mostly occur only within an odd cycle or even cycle. 
Odd cycles can repeat some colors from outside but we characterize how they can be repeated.
However, our arguments are not sufficient to derive a completely combinatorial bound. Instead, we must find a maximum matching in a well-chosen auxiliary graph to compute the very strong rainbow connection number.

We also observe that $\vsrc(G)$ can be computed efficiently for graphs having bounded treewidth, when $\vsrc(G)$ itself is small. In contrast to known results for the (strong) rainbow connection number~\cite{eiben2016complexity}, we present an algorithm that does not rely on Courcelle's theorem. (See section~\ref{sec:other_results} for details.)

\begin{theorem} \label{fpt-kt}
{\kvsrc} is fixed-parameter tractable when parameterized by $k+t$, where $t-1$ is the treewidth of the input graph.
\end{theorem}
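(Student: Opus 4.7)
The plan is to show that any graph $G$ satisfying $\vsrc(G)\le k$ and having treewidth at most $t-1$ must itself be of size bounded by a function of $k$ and $t$, after which we brute-force.

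Two key observations drive the reduction. First, every shortest path in $G$ is rainbow with $k$ colors, so the diameter of $G$ is at most $k$. Second, by the structural result mentioned in the introduction and established in Section~\ref{sec:combinatorial}, the neighborhood $N(v)$ of every vertex $v$ admits a partition into at most $\vsrc(G)\le k$ cliques (intuitively, two same-colored edges incident to $v$ must have adjacent far endpoints, else $v$ would witness a length-$2$ non-rainbow shortest path). Since treewidth $t-1$ forces clique number at most $t$, each such clique has at most $t$ vertices, so the maximum degree of $G$ is at most $kt$.

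I would then use these bounds to cap the size of each connected component: BFS from an arbitrary vertex of a component $C$ reaches all of $C$ within $k$ levels, with a branching factor of at most $kt$, giving $|V(C)|\le (kt)^{k+1}=:N(k,t)$. The algorithm therefore checks in polynomial time, for every connected component $C$, that (i) $\mathrm{diam}(C)\le k$ and (ii) the maximum degree of $C$ is at most $kt$, rejecting if either fails. If both hold, $C$ has at most $N(k,t)$ vertices (and hence $O(N(k,t)^2)$ edges), so brute-forcing over all $k^{|E(C)|}$ edge colorings and verifying each one for the very strong rainbow property (e.g.\ by checking, using all-pairs shortest distances, that no two same-colored edges lie on a common shortest path) takes time depending only on $k$ and $t$. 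Accept iff every component admits a valid coloring.

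The total running time is of the form $f(k,t)+n^{O(1)}$ for a computable $f$, establishing fixed-parameter tractability in $k+t$. The main technical ingredient is the clique-partition observation on neighborhoods, but this is proved separately in Section~\ref{sec:combinatorial}; granted that, the present proof is essentially a size-bounding argument followed by brute force, and in particular avoids tree decompositions and Courcelle's theorem, as promised in the preamble to the theorem.
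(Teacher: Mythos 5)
Your proposal is correct and follows essentially the same route as the paper: both rest on the identical size-bounding argument (diameter at most $k$ from the coloring, maximum degree at most $kt$ from $k$-perfect groupability of Lemma~\ref{lem:groupable} combined with the treewidth bound on clique size), which confines the instance to at most $(kt)^{O(k)}$ vertices, after which an exhaustive step finishes. The only difference is that final step: the paper computes the chromatic number of the conflict graph $G'$ on $|E(G)|\le (kt)^{k+1}$ vertices via the $2^n n^{\order(1)}$ algorithm of Bj\"orklund~\etal, whereas you enumerate all $k^{|E|}$ colorings directly --- both give running times bounded by a function of $k+t$ times a polynomial in $n$.
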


\paragraph{Preliminaries}
We consider simple, undirected graphs and use standard notation for graphs. Given a universe $\U=\left\{ x_1,x_2,\dots, x_n \right\}$ and a family $\F=\left\{ S_1,S_2,\dots,S_t \right\}$ of subsets of $\U$, the \emph{intersection graph} $G(\F)$ of $\F$ has vertex set $\{v_1,\ldots,v_t\}$, and there is an edge between two vertices $v_i,v_j$ if and only if $S_i \cap S_j \not= \emptyset$. We call $\F$ a \emph{representation} of $G(\F)$. An \emph{interval graph} is an intersection graph of intervals on the real line. The interval graph is \emph{proper} if it has a representation by intervals where no interval is properly contained in another. A \emph{circular arc graph} is an intersection graph of arcs of a circle. A chordal graph is an intersection graph of subtrees of a tree.
A \emph{block} of a graph is a maximal $2$-connected component. 
In a \emph{cactus graph}, each block of the graph is a cycle or an edge; equivalently, every edge belongs to at most one cycle.

For a graph $G$, let $\hat{G}$ denote the graph obtained by adding a new vertex $\hat{u}$ to $G$ such that $\hat{u}$ is adjacent to all vertices of $G$, i.e., $\hat{u}$ is a universal vertex in $\hat{G}$. 

Finally, we use $\omega(G)$ to denote the maximum size of any clique in graph $G$. We use $d(u,v)$ to denote the length of a shortest path between vertices $u$ and $v$.

\section{Combinatorial Results} \label{sec:combinatorial}
We show several upper and lower bounds on $\vsrc(G)$, both for general graphs and for graphs $G$ that belong to a specific graph class. Crucial in our analysis are connections between very strong rainbow colorings and decompositions of the input graph into cliques.
We use $\cp(G)$ to denote the \emph{clique partition number} (or clique cover number) of $G$, the smallest number of subsets of $V(G)$ that each induce a clique in $G$ and whose union is $V(G)$.
$\hat{G}$ used in the following lemma (defined in the preliminaries) is important for our hardness reductions.

\begin{lemma}\label{vsrc-l-cp2}
Let $G$ be any graph. Then 
\begin{enumerate}
\item $\src(G) \leq \vsrc(G) \leq \cp(G)(\cp(G)+1)/2$.
\item $\src(\hat{G}) \leq \vsrc(\hat{G}) \leq \cp(G)(\cp(G)+1)/2$. 
\end{enumerate}
\end{lemma}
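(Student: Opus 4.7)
The plan is as follows. The inequality $\src(G) \leq \vsrc(G)$ is immediate from the definitions, since every very strong rainbow coloring is in particular a strong rainbow coloring. The real content is the upper bound $\vsrc(G) \leq \cp(G)(\cp(G)+1)/2$, and then an adaptation to $\hat G$.

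Let $C_1,\dots,C_k$ be a minimum clique partition of $G$, with $k=\cp(G)$. I would introduce one color $c_{i,j}$ for every pair $1\le i\le j\le k$, giving $k(k+1)/2$ colors, and color each edge $uv$ of $G$ with $c_{\min(i,j),\max(i,j)}$, where $i,j$ are the indices of the cliques containing $u$ and $v$ respectively. To show this coloring is very strong rainbow, I would take an arbitrary shortest path $P=v_0,v_1,\dots,v_\ell$, write $f(i)$ for the index of the clique containing $v_i$, and suppose for contradiction that two edges $v_av_{a+1}$ and $v_bv_{b+1}$ with $a<b$ share the same color. Then the multisets $\{f(a),f(a+1)\}$ and $\{f(b),f(b+1)\}$ coincide, and a short case split on which indices match produces two vertices $v_p,v_q$ of $P$ with $q-p\ge 2$ sharing a common clique $C_i$. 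Since $C_i$ induces a clique, $v_pv_q$ is an edge, contradicting the fact that $P$ is a shortest path. The degenerate subcase $f(a)=f(a+1)$ (an ``intra-clique'' edge) is handled in the same way by considering $v_a$ and $v_{b+1}$.

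For part~(2), I would use the same palette of $k(k+1)/2$ colors: copy the coloring on $E(G)\subseteq E(\hat G)$, and colour each edge $\hat u v$ with the diagonal color $c_{i,i}$, where $C_i$ is the clique containing $v$. Every pair of vertices in $\hat G$ is at distance at most $2$, so only shortest paths of length at most $2$ need to be verified. A shortest path $u,\hat u, v$ exists only when $uv\notin E(G)$, in which case $u,v$ belong to distinct cliques $C_i,C_j$ and the two edges receive distinct diagonal colors $c_{i,i}\ne c_{j,j}$. For a shortest path $u,w,v$ with $w\ne\hat u$, the two edges lie in $G$ and the earlier case analysis applies verbatim: a color collision would force $u,v$ into a common clique, hence $uv\in E(G)$, contradicting that $u,w,v$ is a shortest path of length $2$.

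I expect the only real obstacle to be the bookkeeping of the case analysis (essentially the four possibilities for how $\{f(a),f(a+1)\}$ and $\{f(b),f(b+1)\}$ can match, together with the $i=j$ degenerate case); no structural insight beyond the clique-partition coloring is needed. The adaptation to $\hat G$ then comes for free because adding the universal vertex collapses all diameters to at most $2$, so the verification reduces to a single, essentially trivial, length-$2$ argument on top of the length-$2$ case already handled for $G$.
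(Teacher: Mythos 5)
Your proposal is correct and follows essentially the same route as the paper: the same palette of unordered pairs (including singletons) of clique indices from a minimum clique partition, the same shortcutting argument showing two same-colored edges on a shortest path force two non-consecutive path vertices into a common clique, and the same diagonal coloring $c_{i,i}$ for the edges incident to the universal vertex $\hat{u}$. The only differences are cosmetic (the paper names colors as subsets $\{c(u),c(v)\}$ and phrases the shortcut as ``either $uw$ or $vx$'' rather than via your index bookkeeping).
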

\begin{proof}
Let $\C={C_1,\dots,C_r}$ be the set of cliques in an optimal clique partition of $G$; that is, $r = \cp(G)$. For a vertex $v$, let $c(v)$ denote the clique in $\C$ that contains $v$. We define the set of colors as $\mathcal{P}_{\leq 2}(\C) \setminus \{\emptyset\}$, the set of subsets of $\C$ of size~$1$ or~$2$. We then color any edge $uv \in E(G)$ by $\{c(u),c(v)\}$. For sake of contradiction, suppose that this does not constitute a very strong rainbow coloring of $G$. Then there exist two vertices $s,t \in V(G)$, a shortest path $P$ between $s$ and $t$, and two edges $uv, wx \in E(P)$ that received the same color. If $c(u) = c(v)$, then $c(w) = c(x)$, meaning that $P$ uses two edges of the same clique. Then $P$ can be shortcut, contradicting that $P$ is a shortest path between $s$ and $t$. Hence, $c(u) \not= c(v)$ and thus $c(w) \not= c(x)$. Without loss of generality, $c(u) = c(w)$ and thus $c(v) = c(x)$. Then either the edge $uw$ or the edge $vx$ will shortcut $P$, a contradiction. Hence, $\vsrc(G) \leq \cp(G)(\cp(G)+1)/2$ by the set of colors used.\\
To see the second part of the lemma, color edges $\hat{u}v$ incident on the universal vertex $\hat{u}$ in $\hat{G}$ by $c(v)$ in addition to the above coloring. Suppose this was not a very strong rainbow coloring of $\hat{G}$. Then there exists vertices $u,v$ such that $u\hat{u}v$ is a shortest path and $u\hat{u}$ and $v\hat{u}$ are colored the same. But then $u$ and $v$ are in the same clique $C_i$ in $\C$. But then $uv$ can shortcut $u\hat{u}v$, a contradiction.
\qed\end{proof}
%
The following lemma is more consequential for our upper bounds. We use $\isn(G)$ to denote the smallest size of the universe in any intersection graph representation of $G$, and $\ecc(G)$ to denote the smallest number of cliques needed to cover all edges of $G$. It is known that $\isn(G) = \ecc(G)$~\cite{roberts1985applications}.

\begin{lemma}
	\label{lemma:intersection}
Let $G$ be any graph. Then $\vsrc(G) \leq \isn(G) = \ecc(G)$.
\end{lemma}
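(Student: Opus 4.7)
The plan is to use the edge clique cover directly to produce a very strong rainbow coloring, exploiting that a repeated color on a shortest path forces four vertices to lie in a common clique and therefore creates a shortcut. Since the equality $\isn(G) = \ecc(G)$ is cited from Roberts, only the inequality $\vsrc(G) \leq \ecc(G)$ needs to be established.

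First I would fix an edge clique cover $\C = \{C_1, \dots, C_k\}$ of $G$ of minimum size, so $k = \ecc(G)$, and every edge of $G$ lies in at least one $C_i$. Define a coloring $\phi : E(G) \to \{1, \dots, k\}$ by setting $\phi(e) = i$ where $i$ is any index with $e \in C_i$ (ties broken arbitrarily). This uses at most $k = \ecc(G)$ colors.

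Next I would argue that $\phi$ is a very strong rainbow coloring. Suppose, for contradiction, that there exist vertices $s,t$, a shortest $s$--$t$ path $P = v_0 v_1 \cdots v_\ell$, and two distinct edges $e_1 = v_a v_{a+1}$ and $e_2 = v_b v_{b+1}$ on $P$ with $a < b$ and $\phi(e_1) = \phi(e_2) = i$. Then all four vertices $v_a, v_{a+1}, v_b, v_{b+1}$ lie in the clique $C_i$ and are therefore pairwise adjacent. In particular $v_{a+1}$ and $v_b$ are adjacent (if $b = a+1$ they are already identical and we instead use the edge $v_a v_{b+1} = v_a v_{a+2}$, which is a shortcut). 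Replacing the subpath $v_{a+1} \cdots v_b$ of $P$ (of length $b - a - 1 \geq 1$) by the single edge $v_{a+1} v_b$ yields a strictly shorter $s$--$t$ walk whenever $b - a - 1 \geq 2$, contradicting that $P$ is a shortest path; and when $b - a - 1 = 1$, the edge $v_a v_{b+1}$ (which exists in $C_i$) shortcuts the length-$3$ subpath $v_a v_{a+1} v_b v_{b+1}$. In every case we obtain a strictly shorter $s$--$t$ path, a contradiction.

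Combining these yields $\vsrc(G) \leq k = \ecc(G) = \isn(G)$. The only subtlety is the case analysis by how close the two repeated edges lie on $P$; once the endpoints of both edges are forced into a common clique, each configuration admits an explicit shortcut, so no real obstacle arises. The argument parallels the one in Lemma~\ref{vsrc-l-cp2} but uses an edge-cover rather than a vertex-partition, which in turn removes the quadratic blow-up in the number of colors.
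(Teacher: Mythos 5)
Your proof is correct and is essentially the same as the paper's: the paper colors each edge by a universe element of a minimum intersection representation, which under the standard correspondence $\isn(G)=\ecc(G)$ is exactly your coloring by cliques of a minimum edge clique cover, and both arguments conclude by observing that a repeated color on a shortest path places four of its vertices in a common clique, yielding a shortcut. Your explicit case analysis of the shortcut is just a more detailed writing-out of the paper's final step.
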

\begin{proof}
Let $\;\U=\left\{ x_1,x_2,\dots, x_n \right\}$ be a universe and let $\F=\left\{ S_1,S_2,\dots,S_m \right\}$ be a family of subsets of $\U$, such that $G$ is the intersection graph of $\F$ and $|\U| = \isn(G)$. Let $v_i$ be the vertex of $G$ corresponding to the set $S_i$. We consider $x_1, x_2, \dots, x_n$ as colors, and color an edge between vertices $v_i$ and $v_j$ with any $x \in S_i \cap S_j$ (note that this intersection is nonempty by the presence of the edge). Suppose for sake of contradiction that this is not a very strong rainbow coloring of $G$. Then there exist two vertices $s,t \in V(G)$, a shortest path $P$ between $s$ and $t$, and two edges $v_iv_j$ and $v_av_b$ in $P$ that received the same color $x$. By the construction of the coloring, this implies that $x\in S_i\cap S_j\cap S_a\cap S_b$. Hence, $v_i,v_j,v_a,v_b$ induce a clique in $G$. But then the path $P$ can be shortcut, a contradiction.
\qed\end{proof}
A similar lemma for $\src(G)$ was proved independently by Lauri~\cite[Prop.~5.3]{lauri2016chasing}.

\begin{corollary}
Let $G$ be any graph. Then $\vsrc(G) \leq \min\{\lfloor |V(G)|^2/4 \rfloor, |E(G)|\}$.
\end{corollary}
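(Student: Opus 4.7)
The plan is to invoke Lemma~\ref{lemma:intersection} and reduce the corollary to a bound on $\ecc(G)$. Since $\vsrc(G) \leq \ecc(G)$, it suffices to prove that $\ecc(G) \leq \min\{\lfloor |V(G)|^2/4 \rfloor, |E(G)|\}$.

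The bound $\ecc(G) \leq |E(G)|$ is immediate: the trivial cover in which each edge is taken as a $2$-clique on its own witnesses this inequality.

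For the bound $\ecc(G) \leq \lfloor |V(G)|^2 / 4 \rfloor$, the plan is to cite the classical theorem of Erd\H{o}s, Goodman, and P\'osa, which asserts that the edges of any graph on $n$ vertices can be partitioned into at most $\lfloor n^2/4 \rfloor$ cliques (edges and triangles suffice). The extremal case $K_{\lfloor n/2 \rfloor, \lceil n/2 \rceil}$ has $\lfloor n^2/4 \rfloor$ edges and is triangle-free, so every edge must form its own clique in any cover, showing the bound is tight.

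Since the entire argument is a direct combination of Lemma~\ref{lemma:intersection} with two well-known facts about the edge clique cover number, there is no substantial obstacle; the proof is essentially a one-line derivation from these two inputs.
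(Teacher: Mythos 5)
Your proof is correct and follows essentially the same route as the paper, which derives the corollary directly from Lemma~\ref{lemma:intersection} together with the bound $\ecc(G)\leq \min\{\lfloor |V(G)|^2/4 \rfloor, |E(G)|\}$ from Erd\H{o}s, Goodman, and P\'osa. Your added details (the trivial single-edge cover and the extremal bipartite example) are fine elaborations but do not change the argument.
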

\begin{proof}
Directly from $\ecc(G)\leq \min\{\lfloor |V(G)|^2/4 \rfloor, |E(G)|\}$ for any graph~\cite{erdos1966representation}.
\qed\end{proof}


\begin{corollary}
	\label{cor:chordal}
Let $G$ be any graph.
\begin{enumerate}
\item If $G$ is chordal, then $\src(G) \leq \vsrc(G) \leq |V(G)| - \omega(G) + 1$.
\item If $G$ is circular-arc, then $\src(G) \leq \vsrc(G) \leq |V(G)|$.
\item $\src(L(G)) \leq \vsrc(L(G)) \leq |V(G)|$, where $L(G)$ is the line graph of $G$.
\end{enumerate}
These bounds are (almost) tight in general.
\end{corollary}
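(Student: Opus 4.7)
The plan is to derive all three upper bounds from Lemma~\ref{lemma:intersection}, which already gives $\vsrc(G) \leq \isn(G) = \ecc(G)$, while the inequalities $\src(G) \leq \vsrc(G)$ are immediate from the definitions. So for each of the three graph classes it suffices to exhibit an intersection representation (equivalently, an edge clique cover) of the claimed size.

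Parts 2 and 3 admit direct intersection representations. For Part~3, I represent the vertex of $L(G)$ corresponding to an edge $e=uv \in E(G)$ by the two-element set $\{u,v\} \subseteq V(G)$; two such sets intersect iff the corresponding edges of $G$ share an endpoint, which is exactly the adjacency relation in $L(G)$, so $\isn(L(G)) \leq |V(G)|$. For Part~2, given arcs $A_1,\ldots,A_n$ realizing the circular-arc graph $G$, I fix a ``start'' $a_i$ of each $A_i$ and set $S_i := \{a_k : a_k \in A_i\}$. One direction of the intersection equivalence is immediate. For the converse I would use the geometric fact that whenever two arcs meet at some point $q$, at least one of them contains the other's start: walking counter-clockwise from $q$ along both arcs, the one whose start is met second must contain the other's start. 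This yields an intersection representation over a universe of size at most $n$, so $\isn(G) \leq n$.

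Part~1 is the most involved step. I would bound $\ecc(G) \leq |V(G)| - \omega(G) + 1$ by choosing a perfect elimination ordering (PEO) $\sigma = v_1,\ldots,v_n$ in which a fixed maximum clique $K$ occupies the last $\omega$ positions. Given such a $\sigma$, the standard later-neighborhood cliques $C_i = \{v_i\} \cup (N(v_i) \cap \{v_{i+1},\ldots,v_n\})$ cover every edge of $G$, and because $\{v_{n-\omega+1},\ldots,v_n\} = K$ we get $C_{n-\omega+1} = K$ with $C_{n-\omega+2},\ldots,C_{n-1}$ all contained in $K$; only the first $n-\omega+1$ cliques are needed. To produce such a $\sigma$ I would induct on $n$: if $G$ is a clique there is nothing to do; otherwise, a non-complete chordal graph has at least two pairwise non-adjacent simplicial vertices, so at least one, call it $v$, lies outside the clique $K$. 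Applying the inductive hypothesis to the chordal graph $G - v$ (in which $K$ remains a maximum clique) and prepending $v$ to the resulting PEO yields the desired $\sigma$; the prepended $v$ is simplicial in $G$, so the sequence remains a PEO of $G$.

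For the ``almost tight in general'' claim, the path $P_n$ is chordal with $\omega = 2$ and $\vsrc(P_n) = n-1 = |V(P_n)| - \omega(P_n) + 1$, so the chordal bound in Part~1 is exactly tight. Cycles as circular-arc graphs and line graphs of trees witness the near-tightness of Parts~2 and~3 up to a small additive constant. The main obstacle in the whole proof is the chordal case, specifically the claim that a maximum clique can be placed at the end of \emph{some} PEO; I expect the inductive argument above, supported by the fact that a non-complete chordal graph has two non-adjacent simplicial vertices, to dispose of this cleanly without heavier machinery such as clique trees.
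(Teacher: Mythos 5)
Your overall strategy matches the paper's: all three bounds are routed through Lemma~\ref{lemma:intersection}, and your representations for parts~2 and~3 (one chosen endpoint of each arc as the universe element; two-element endpoint sets for the line graph) are essentially identical to the paper's, with your counterclockwise-walk argument supplying the detail the paper dismisses as easy to see. Where you genuinely diverge is part~1. The paper proves $\isn(G) \leq |V(G)|-\omega(G)+1$ (Lemma~\ref{lem:chordal}) by inductively building a clique tree with at most $n-\omega(G)+1$ nodes and representing $G$ as an intersection graph of subtrees of that tree; you instead bound $\ecc(G)$ directly by taking the later-neighborhood cliques of a perfect elimination ordering that ends in a maximum clique $K$, and observing that the last $\omega(G)-1$ of these cliques are contained in $K$ and hence redundant. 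Your route is more elementary --- it avoids the clique-tree machinery entirely and uses only the identity $\isn=\ecc$ already invoked in Lemma~\ref{lemma:intersection} --- and your inductive construction of the required ordering (repeatedly peel off a simplicial vertex outside $K$, which exists because a non-complete chordal graph has two non-adjacent simplicial vertices) is a clean, self-contained substitute for the paper's appeal to lexicographic search. Both arguments are correct and yield the same bound.

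One small slip in your tightness discussion: cycles do not witness near-tightness of part~2. An even cycle $C_n$ has $\vsrc(C_n)=n/2$ (opposite edges may share a color), and odd cycles also admit substantial color reuse, so $\vsrc(C_n)$ is far from $n$. But paths are interval graphs and hence circular-arc graphs, so your own example $\vsrc(P_n)=n-1$ already shows part~2 is tight up to an additive~$1$; this is exactly the witness the paper uses for all three parts.
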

\begin{proof}
	In each of the three cases, we express the graph as an intersection graph over a suitable universe, and then by Lemma~\ref{lemma:intersection}, we get that the size of the universe is an upper bound on $\vsrc$ of the graph.

	Every chordal graph is the intersection graph of subtrees of a tree~\cite{GAVRIL197447}. It is also known that the number of vertices of this tree only needs to be at most $|V(G)|-\omega(G)+1$.
	(For completeness, we provide a proof of this in Lemma~\ref{lem:chordal} below.)

For a circular arc graph $G$, consider any set of arcs whose intersection graph is $G$. We now construct a different intersection representation. Take the set of second (considering a clockwise ordering of points) endpoints of all arcs as the universe $\U$. Take $S_i\subseteq \U$ as the set of clockwise endpoints contained in the $i$-th arc. It is easy to see that $G$ is the intersection graph of $\F=\left\{ S_1,S_2,\dots, S_n \right\}$.

Finally, consider $L(G)$. We construct an intersection representation with universe $V(G)$. For each $uv \in E(G)$, let $S_{uv}=\{u,v\}$. Then $L(G)$ is the intersection graph of $\F=\left\{ S_e:e\in E(G) \right\}$.

The (almost) tightness follows from $\vsrc(G) = |V(G)|-1$ and $\vsrc(L(G)) = |V(G)|-2$ for any path $G$. Paths are both chordal and circular-arc.
\qed\end{proof}
\begin{lemma}\label{lem:chordal}
Let $G$ be any chordal graph. Then $\isn(G) \leq |V(G)| - \omega(G) + 1$.
\end{lemma}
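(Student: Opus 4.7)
The plan is to use the classical clique-tree representation of chordal graphs and then bound the number of maximal cliques by $|V(G)| - \omega(G) + 1$. By a theorem of Gavril, every chordal graph $G$ admits a clique tree: a tree $T$ whose nodes are the maximal cliques of $G$ such that, for every $v \in V(G)$, the set of maximal cliques containing $v$ induces a subtree $S_v$ of $T$. This yields an intersection representation of $G$ with universe $V(T)$ and sets $\{S_v : v \in V(G)\}$, so $\isn(G)$ is at most the number of maximal cliques of $G$.

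The second step is to show that a chordal graph on $n$ vertices with maximum clique size $\omega$ has at most $n - \omega + 1$ maximal cliques. To this end, let $K$ be a maximum clique and construct a perfect elimination ordering (PEO) $v_1, \ldots, v_n$ with $\{v_{n-\omega+1}, \ldots, v_n\} = K$. Such an ordering exists: whenever $G$ is not itself a single clique, Dirac's theorem guarantees at least two non-adjacent simplicial vertices, and since $K$ is a clique at least one of these lies outside $K$; removing such a simplicial vertex preserves chordality, so iterating until only $K$ remains yields the initial segment of the PEO, and the vertices of $K$ can then be appended in any order.

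For each $i$, set $C_i = \{v_i\} \cup (N(v_i) \cap \{v_{i+1}, \ldots, v_n\})$, which is a clique by simpliciality of $v_i$ in the induced subgraph on $\{v_i, \ldots, v_n\}$. A standard observation is that every maximal clique of $G$ equals $C_i$ for its smallest-indexed vertex $v_i$. For $i \geq n - \omega + 1$, we have $C_i \subseteq K = C_{n-\omega+1}$, so only $C_{n-\omega+1}$ among these can be maximal. Hence at most $n - \omega + 1$ indices produce maximal cliques, and combining with the first step gives $\isn(G) \leq n - \omega + 1$.

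The main obstacle is justifying that a PEO can be arranged so that a chosen maximum clique $K$ appears in the final $\omega$ positions; once this is in place, the counting argument on the $C_i$'s is straightforward, and the clique-tree representation of step one immediately converts the bound on the number of maximal cliques into the claimed bound on $\isn(G)$.
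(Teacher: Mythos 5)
Your proof is correct. The core combinatorial idea is the same as the paper's --- a perfect elimination ordering whose last $\omega(G)$ vertices form a maximum clique, together with the observation that each earlier vertex contributes at most one new maximal clique --- but you package it differently. The paper constructs the subtree-of-a-tree intersection representation explicitly, by downward induction along the PEO, adding a tree node only when the newly introduced simplicial vertex creates a genuinely new maximal clique; the bound on the number of tree nodes falls out of the construction. You instead split the argument into two black-boxable pieces: (a) Gavril's clique-tree theorem gives $\isn(G) \leq$ (number of maximal cliques), and (b) the PEO argument shows a chordal graph has at most $|V(G)| - \omega(G) + 1$ maximal cliques. Your step (b) is clean: every maximal clique equals $C_i = \{v_i\} \cup (N(v_i) \cap \{v_{i+1},\dots,v_n\})$ for its smallest-indexed vertex, and the last $\omega(G)$ indices contribute only $K$ itself. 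Your justification for arranging the PEO so that $K$ comes last (Dirac's two non-adjacent simplicial vertices, of which at most one can lie in the clique $K$, plus the fact that no induced subgraph properly containing $K$ can be complete since $K$ is maximum) is the right argument and is the same existence fact the paper asserts via lexicographic search. One remark: you do not actually need the clique tree for step (a) --- since the paper already records $\isn(G) = \ecc(G)$ and the maximal cliques form an edge clique cover, the bound $\isn(G) \leq$ (number of maximal cliques) is immediate, which would make your route even more elementary.
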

\begin{proof}
Let $n = |V(G)|$. Since $G$ is a chordal graph, it has a perfect elimination order~\cite{golumbic2004algorithmic}. In fact, the lexicographic search algorithm that construct a perfect elimination order implies the existence of such an order $v_1, v_2, \dots, v_n$ such that $v_n, v_{n-1}\dots, v_{n-\omega(G)+1}$ forms a maximum clique. For any $i = 1,\ldots,n$, let $G_i=G[{v_n,v_{n-1},\dots, v_i}]$.

We claim that for $n-\omega(G)+1\ge i\ge 1$, $G_i$ can be represented as the intersection graph of subtrees $S_n,S_{n-1},\dots, S_i$ of a tree $T_i$ on at most $n-i-\omega(G)+2$ vertices, and there exists a bijection $f_i$ from vertices of $T_i$ to maximal cliques of $G_i$ such that for each $n\ge k\ge i$, $S_k=T_i[{u\in V(T_i): v_k\in f_i(u)}]$. 

We prove the claim by downwards induction on $i$. The base case is when $i=n-\omega(G)+1$. In this case, the statement follows by taking $T_i$ as a single vertex $u$ and $f_i(u)$ as the clique $\{v_n,v_{n-1},\dots, v_i\}$.

Now, assuming the statement is true for $i$, we prove it for $i-1$. Since $v_{i-1}$ is simplicial in $G_{i-1}$, $N_{G_{i-1}}(v_{i-1})$ is a subset of some maximal clique $C$ of $G_j$. Let $t$ be the vertex in $T_j$ such that $f_j(t)=C$. If all vertices in $C$ are adjacent to $v_{i-1}$ in $G_{i-1}$, then we take $T_{i-1}=T_{i}$, $f_{i-1}(t)=f_i(t)\cup \{v_{i-1}\}$, and $f_{i-1}(t')=f_{i}(t')$ for all $t'\in V(T_{j+1})$ such that $t'\neq t$. It is easy to see that the statement follows in this case.	Now suppose that not all vertices in $C$ are adjacent to $v_{i-1}$ in $G_{i-1}$. Then we take $V(T_{i-1})=V(T_j)\cup \{u\}$ and $E(T_{i-1})=E(T_i)\cup \{u,t\}$ where $u$ is a new vertex introduced with $f_{i-1}(u)=N_{G_{i-1}}[v_{i-1}]$.	For all $u'\in V(T_{i-1})$ such that $u'\neq u$, we take $f_{i-1}(u')=f_{i}(u) $. The statement follows from this construction.

By taking $i=1$ in the statement of the claim, it follows that $G$ can be represented as the intersection graph of subtrees of a tree with at most $n-\omega(G)+1$ vertices.
\qed\end{proof}

In the remainder, we consider a natural generalization of line graphs. A graph is \emph{$k$-perfectly groupable} if the neighborhood of each vertex can be partitioned into $k$ or fewer cliques. It is well known that line graphs are $2$-perfectly groupable. A graph is \emph{$k$-perfectly orientable} if there exists an orientation of its edges such that the outgoing neighbors of each vertex can be partitioned into $k$ or fewer cliques. Clearly, any $k$-perfectly groupable graph is also $k$-perfectly orientable. Many geometric intersection graphs, such as disk graphs, are known to be $k$-perfectly orientable for small~$k$~\cite{kammer}.

\begin{corollary} \label{cor:orientable}
Let $G$ be any $k$-perfectly orientable graph. Then, $\src(G) \leq \vsrc(G) \leq k |V(G)|$.
\end{corollary}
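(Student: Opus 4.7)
The plan is to mimic the proof used for $L(G)$ in Corollary~\ref{cor:chordal}, replacing the ``endpoints of an edge'' representation by one built from the $k$-perfect orientation. Equivalently, I will construct an explicit edge coloring with $k|V(G)|$ colors and verify the very strong rainbow property directly; one can equally well phrase this as exhibiting an intersection representation of $G$ over a universe of size $k|V(G)|$ and then invoking Lemma~\ref{lemma:intersection}.

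First I would fix a $k$-perfect orientation of $G$, and for every vertex $v$, a partition $C_1^v,\dots,C_k^v$ of its out-neighborhood $N^+(v)$ into at most $k$ cliques (padding with empty classes if fewer than $k$ are needed). I then take $V(G)\times[k]$ as the color set, giving $k|V(G)|$ colors in total, and color every oriented edge $u\to v$ by $(u,i)$, where $i$ is the unique index with $v\in C_i^u$. Since every edge has exactly one tail under the orientation, this coloring is well-defined.

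The main verification step is to show that this is a very strong rainbow coloring. Suppose for contradiction that some shortest path $P$ contains two distinct edges $e_1,e_2$ sharing color $(w,i)$. By construction both $e_1$ and $e_2$ are oriented out of $w$, so $e_1=wv$ and $e_2=wy$ with $v,y\in C_i^w$; since $e_1 \neq e_2$ we have $v\neq y$, and since $C_i^w$ is a clique, $vy\in E(G)$. However, the two edges $e_1,e_2$ both incident to $w$ force $P$ to traverse the subpath $v,w,y$ (or its reverse), i.e.\ $v$ and $y$ are at distance two along $P$, while the chord $vy$ offers a one-step shortcut, contradicting that $P$ is a shortest path. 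The first inequality $\src(G)\leq\vsrc(G)$ is immediate from the definitions.

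I expect no serious obstacle: the argument is essentially a direct generalization of the line-graph case. The only subtle point worth flagging is that the orientation is genuinely used: it selects which endpoint contributes the ``tail label'' $w$ in the color, so that two edges sharing a color must share a common tail. Without orientation this identification breaks and the shortcut step fails; the hypothesis that $G$ is $k$-perfectly orientable supplies exactly the structure needed.
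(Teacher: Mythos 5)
Your proposal is correct and is essentially the paper's argument: the paper also builds, from the orientation, the family of cliques $\{v\}\cup C_i^v$ (one per vertex and out-clique index), observes this is an edge clique cover of size at most $k|V(G)|$, and concludes via $\vsrc(G)\leq\ecc(G)$ from Lemma~\ref{lemma:intersection}. You merely inline the verification of the very strong rainbow property instead of citing that lemma, which is the alternative phrasing you yourself flag as equivalent.
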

\begin{proof}
Consider any orientation of the edges of $G$ such that the outgoing neighbors of each vertex can be partitioned into $k$ or fewer cliques. For a given vertex $v$, let $C(v)$ denote the set of cliques induced by its outgoing neighbors, where $v$ is added to each of those cliques. Observe that $\bigcup_{v\in V(G)}C(v)$ is an edge clique cover of $G$, because every edge is outgoing from some vertex $v$ and will thus be covered by a clique in $C(v)$. Hence, $\vsrc(G) \leq \ecc(G) \leq k |V(G)|$.
\qed\end{proof}

Since any $k$-perfectly groupable graph is also $k$-perfectly orientable, the above bound also applies to $k$-perfectly groupable graphs. In this context, we prove an interesting converse of the above bound.

\begin{lemma} \label{lem:groupable}
Let $G$ be any graph. If $\vsrc(G) \leq k$, then $G$ is $k$-perfectly groupable.
\end{lemma}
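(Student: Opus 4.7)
The plan is to fix any very strong rainbow coloring of $G$ using at most $k$ colors, and then use the colors of edges incident to an arbitrary vertex $v$ as a partition of $N(v)$. Specifically, for each vertex $v \in V(G)$, assign each neighbor $u \in N(v)$ to the class labeled by the color of the edge $uv$. This partitions $N(v)$ into at most $k$ nonempty classes, so it remains to show that each class is a clique.

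The key step is a short contradiction argument: suppose $u, w \in N(v)$ are distinct neighbors such that $uv$ and $wv$ receive the same color. If $u$ and $w$ were nonadjacent in $G$, then $d(u,w) = 2$ (since $u - v - w$ is a path of length $2$), and so $u - v - w$ would be a shortest $u$-$w$ path. But this shortest path uses two edges of the same color, contradicting the assumption that the coloring is a very strong rainbow coloring. Hence $uw \in E(G)$, so every color class within $N(v)$ induces a clique, establishing that $N(v)$ is partitioned into at most $k$ cliques.

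There is no real obstacle here; the statement is essentially a local consequence of the defining property of very strong rainbow colorings applied to paths of length~$2$ through each vertex. The only mild subtlety is to note that the argument works for \emph{every} vertex $v$ uniformly using the same global coloring, which is exactly what the very strong rainbow condition guarantees, and that the number of color classes used on edges incident to $v$ is of course at most the total number of colors $\vsrc(G) \le k$.
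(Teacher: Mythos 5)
Your proof is correct and follows essentially the same argument as the paper: partition $N(v)$ by the color of the edge to $v$, and observe that two nonadjacent neighbors in the same class would give a shortest path of length~$2$ with two equally colored edges. Nothing further is needed.
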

\begin{proof}
Consider an optimal very strong rainbow coloring $\mu$ of $G$. Consider an arbitrary vertex $v$ of $G$ and let $c$ be any color used in $\mu$. Define the set $Q(c)=\left\{ u\in N(v): \mu(vu)=c \right\}$. Suppose there exist two non-adjacent vertices $u,w$ in $Q(c)$. Then $uvw$ is a shortest path between $u$ and $w$, and thus $uv$ and $vw$ cannot have the same color, a contradiction. Hence, for each color $c$ used in $\mu$, $Q(c)$ is a clique. Since the number of colors is at most $k$, the edges incident on $v$ can be covered with at most $k$ cliques. Hence, $G$ is $k$-perfectly groupable.
\qed\end{proof}

\section{Hardness Results}\label{sec:hardness}
The hardness results lean heavily on the combinatorial bounds of the previous section. 
In this section, we use $\hat{G}$ (see the preliminaries for the definition) extensively.
We need the following bound, which strengthens Lemma~\ref{vsrc-l-cp2}.

\begin{lemma}\label{cp-3-vsrc}
Let $G$ be any graph. If $\cp(G) \leq 3$, then $\vsrc(\hat{G})\le 3$.
\end{lemma}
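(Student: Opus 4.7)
The plan is to exhibit an explicit $3$-coloring of $E(\hat{G})$. Fix any clique partition $\{C_1,C_2,C_3\}$ of $G$ witnessing $\cp(G)\le 3$ (allowing empty classes), and for $v\in V(G)$ let $c(v)\in\{1,2,3\}$ denote the index of the clique containing~$v$. I will use colors $\{1,2,3\}$ and color the edges of $\hat{G}$ as follows: an edge $\hat{u}v$ incident to the universal vertex receives color $c(v)$; an edge $uv\in E(G)$ with $c(u)=c(v)$ receives color $c(u)$; an edge $uv\in E(G)$ with $c(u)\ne c(v)$ receives the unique color in $\{1,2,3\}\setminus\{c(u),c(v)\}$.

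The key structural observation is that because $\hat{u}$ is universal in $\hat{G}$, the diameter of $\hat{G}$ is at most~$2$, so every shortest path has length at most~$2$. Hence it suffices to verify that for every shortest path $xyz$ of length~$2$ in $\hat{G}$ the two edges $xy$ and $yz$ receive different colors. I would split this into two cases according to whether the midpoint $y$ equals $\hat{u}$ or lies in $V(G)$.

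In the first case, $x,z\in V(G)$ are non-adjacent in $G$, so they lie in different cliques of the partition; the edges are colored $c(x)$ and $c(z)$, which differ. In the second case, $x,z\in V(G)$ are non-adjacent in $G$ and $y\in V(G)$ is adjacent to both. Non-adjacency of $x,z$ forces $c(x)\ne c(z)$. If $c(y)\in\{c(x),c(z)\}$, say $c(y)=c(x)$, then the edge $xy$ gets color $c(x)$ while $yz$ gets the unique color avoiding $\{c(y),c(z)\}=\{c(x),c(z)\}$, and these differ. The only remaining subcase is $c(y)\notin\{c(x),c(z)\}$, so $c(x),c(y),c(z)$ is a permutation of $\{1,2,3\}$; then $xy$ gets the third color $c(z)$ and $yz$ gets the third color $c(x)$, and again these differ.

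The subtle step — and the one I would flag as the main obstacle — is exactly that last subcase: with only three colors available, forcing the two edges of a length-$2$ path to disagree requires the assignment to exploit the full symmetry of the three clique classes, which is why both endpoints' clique-labels (rather than, say, just the midpoint's) must enter the coloring rule. Everything else reduces to checking that the path $xyz$ being a shortest path contradicts $c(x)=c(z)$, since same-class vertices are adjacent in $G$. This completes the verification and yields $\vsrc(\hat{G})\le 3$.
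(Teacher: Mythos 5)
Your proposal is correct and uses exactly the same coloring as the paper (edges inside a clique class get that class's color, cross edges get the third color, edges to $\hat{u}$ get the endpoint's class color), together with the same diameter-$2$ reduction to length-$2$ shortest paths. The only cosmetic difference is that you verify the color disagreement by direct case analysis on $c(x),c(y),c(z)$, whereas the paper argues by contradiction that equal colors force two of the three vertices into one clique and hence a shortcut; the content is identical.
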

\begin{proof}
	Let $C_1$, $C_2$, and $C_3$ be three cliques into which $V(G)$ is partitioned. We will color $\hat{G}$ with three colors, say $c_1$, $c_2$, and $c_3$, as follows. For each edge with both endpoints in $C_i$ for $1\le i\le 3$, color it with $c_i$. For each edge $vw$ with $v \in C_i$, $w \in C_j$ such that $1 \le i < j \le 3$, color it with $c_k$, where $k \in \{1,2,3\} \setminus \{i,j\}$. Finally, for each edge $\hat{u}v$ with $v\in C_i$ for $1\le i\le 3$, color it with $c_i$. 

Suppose this is not a very strong rainbow coloring of $\hat{G}$. Since the diameter of $\hat{G}$ is at most~$2$, there exists a shortest path $xyz$ with $xy$ and $yz$ having the same color. However, if $xy$ and $yz$ have the same color, at least two of $x,y$ and $z$ are in the same $C_i$ for $1\le i\le 3$ and the third one is either $\hat{u}$ or in $C_i$ itself. Then, we can shortcut $xyz$ by $xz$, a contradiction. 
	 Hence $\vsrc(\hat{G}) \leq 3$.
\qed\end{proof}

\begin{proof}[of Theorem~\ref{thm:hardness}]
	We first prove that {\tvsrc} is NP-complete. We reduce from the NP-hard \textsc{$3$-Coloring} problem~\cite{gary1979computers}. Let $G$ be an instance of \textsc{$3$-Coloring}. Let $H$ be the complement of $G$. We claim that $\vsrc(\hat{H}) = 3$ if and only if $G$ is $3$-colorable. Indeed, if $\vsrc(\hat{H}) \leq 3$, then $\hat{H}$ is $3$-perfectly groupable by Lemma~\ref{lem:groupable}. In particular, the neighborhood of $\hat{u}$ (the universal vertex in $\hat{H}$) can be partitioned into at most $3$ cliques. These cliques induce disjoint independent sets in $G$ that cover $V(G)$, and thus $G$ is $3$-colorable. For the other direction, note that if $G$ is $3$-colorable, then $\cp(H) \leq 3$, and by Lemma~\ref{cp-3-vsrc}, $\vsrc(\hat{H}) \leq 3$.

To prove the hardness of approximation, we recall that there exists a polynomial-time algorithm that takes a {\sat} formula $\psi$ as input and produces a graph $G$ as output such that if $\psi$ is not satisfiable, then $\cp(G) \geq |V(G)|^{1-\varepsilon}$, and if $\psi$ is satisfiable, then $\cp(G)  \leq |V(G)|^{\varepsilon}$~\cite[Proof of Theorem 2]{zuckerman2006linear}. Consider the graph $\hat{G}$ and let $n$ denote the number of its vertices. Then
\begin{align*}\psi\ \mbox{not satisfiable} \Rightarrow \cp(G)\ge (n-1)^{1-\varepsilon} \Rightarrow \vsrc(\hat{G})\ge (n-1)^{1-\varepsilon}\\
\psi\ \mbox{satisfiable} \Rightarrow \cp(G)\le (n-1)^{\varepsilon} \Rightarrow \vsrc(\hat{G})\le (n-1)^{2\varepsilon}\end{align*}
because Lemma~\ref{lem:groupable} implies that $\vsrc(\hat G)\ge \cp(G)$, and 
by Lemma~\ref{vsrc-l-cp2}. The result follows by rescaling $\varepsilon$.
\qed\end{proof}

\section{Algorithm for Cactus Graphs}
\label{sec:polynomial_time_algorithm_for_very_strong_rainbow_coloring_cactus_graphs}
Let $G$ be the input cactus graph. 
We first prove several structural properties of cactus graphs, before presenting the actual algorithm.

\subsection{Definitions and Structural Properties of Cactus Graphs}
We make several structural observations related to cycles.
For a vertex $v$ and a cycle $C$ containing $v$, we define $\subg\left(v,C\right)$ as the vertices of $G$ that are reachable from $v$ without using any edge of~$C$.

\begin{observation}
	\label{sv-partition}
	For any cycle $C$ in $G$, $\left\{ 
	\subg(v,C):v\in V(C) \right\}$ is a partition of $V(G)$.
\end{observation}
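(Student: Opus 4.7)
The plan is to identify $\subg(v,C)$ with the vertex set of the connected component of $G-E(C)$ that contains $v$, and then to use the cactus property to show that distinct vertices of $V(C)$ lie in distinct components. Once this is done, the partition structure is immediate: nonemptiness holds because $v\in\subg(v,C)$ always.

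For the covering part, I would argue as follows. For any vertex $u\in V(G)$, connectedness of $G$ gives a path in $G$ from $u$ to some vertex of $V(C)$. Let $v'$ be the first vertex of $V(C)$ encountered along this path. The prefix ending at $v'$ uses no edge of $C$, because every edge of $C$ has both endpoints in $V(C)$ while no earlier vertex of the prefix lies in $V(C)$. Hence $u\in\subg(v',C)$, giving $V(G)=\bigcup_{v\in V(C)}\subg(v,C)$.

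For disjointness, suppose for contradiction that some vertex belongs to $\subg(v_1,C)\cap\subg(v_2,C)$ for distinct $v_1,v_2\in V(C)$. Concatenating the two witnessing walks yields a walk in $G-E(C)$ from $v_1$ to $v_2$, and from this walk I would extract the shortest prefix $Q$ that terminates at a vertex $v'\in V(C)\setminus\{v_1\}$, so that all internal vertices of $Q$ lie outside $V(C)$. The cycle $C$ supplies two internally disjoint arcs $A_1,A_2$ from $v_1$ to $v'$; appending $Q$ to each of these produces two distinct simple cycles that share every edge of $Q$. Since $Q$ contains at least one edge, this contradicts the defining property of a cactus that every edge belongs to at most one cycle.

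The main obstacle is the bookkeeping in this last step: choosing $Q$ carefully enough that its interior avoids $V(C)$ is what ensures that $Q\cup A_1$ and $Q\cup A_2$ are genuinely simple cycles, rather than closed walks that might collapse. Everything else reduces to the two straightforward observations that $\subg(v,C)$ coincides with the component of $G-E(C)$ containing $v$ and that the connectedness of $G$ forces each such component to meet $V(C)$.
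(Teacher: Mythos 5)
The paper states this as an \emph{Observation} with no proof at all, so there is nothing to compare against; your argument is the natural one and is essentially correct, supplying exactly the missing justification (covering via the first vertex of $V(C)$ hit by a path from $u$, and disjointness via the cactus property). The one point you flag yourself does need the obvious fix: before extracting the prefix $Q$, shortcut the concatenated walk to a \emph{simple} path from $v_1$ to $v_2$ in $G-E(C)$; only then does ``shortest prefix ending in $V(C)\setminus\{v_1\}$'' guarantee that $Q$ avoids revisiting $v_1$, so that $Q\cup A_1$ and $Q\cup A_2$ are two genuinely distinct simple cycles sharing an edge of $Q$, contradicting that every edge of a cactus lies in at most one cycle.
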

From Observation~\ref{sv-partition}, we have that for any fixed $u\in V(G)$ and any fixed cycle $C$ of $G$, there exists a unique vertex $v\in V(C)$ such that $u\in \subg(v,C)$. We denote that unique vertex $v$ by $\g(u,C)$.

\begin{observation}
	\label{path-entry}
	Let $u\in V(G)$ and let $C$ be a cycle in $G$. 
	Let $w\in V(C)$ and let $x_1x_2\dots x_r$ be a path from $u$ to $w$ where $x_1=u$ and $x_r=w$.
	Let $i^*$ be the smallest $i$ such that $x_i\in V(C)$.
	Then, $x_{i^*}=\g(u,C)$.
	In simpler words, any path from $u$ to any vertex in $C$ enters $C$ through $\g(u,C)$.
\end{observation}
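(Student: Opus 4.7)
The plan is to deduce the observation directly from the definition of $\subg(v,C)$ together with the uniqueness guaranteed by Observation~\ref{sv-partition}. Recall $\g(u,C)$ is the \emph{unique} vertex $v\in V(C)$ with $u\in \subg(v,C)$, i.e., with the property that there is a walk from $v$ to $u$ that uses no edge of $C$. So all I need is to exhibit a walk from $x_{i^*}$ to $u$ that avoids $E(C)$; the uniqueness clause then forces $x_{i^*}=\g(u,C)$.

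First I would consider the prefix $x_1x_2\dots x_{i^*}$ of the given path. By the minimality of $i^*$, none of the vertices $x_1,\dots,x_{i^*-1}$ lie in $V(C)$. Then for each $1\le j\le i^*-1$, the edge $x_jx_{j+1}$ has at least one endpoint outside $V(C)$ (namely $x_j$), so this edge cannot belong to $E(C)$. Reversing the prefix therefore gives a walk from $x_{i^*}$ to $u=x_1$ that uses no edge of $C$, which is precisely the statement $u\in \subg(x_{i^*},C)$.

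Finally, I would invoke Observation~\ref{sv-partition}: since $\{\subg(v,C):v\in V(C)\}$ partitions $V(G)$, the vertex of $C$ whose set $\subg(\cdot,C)$ contains $u$ is unique. Hence $x_{i^*}=\g(u,C)$, completing the argument.

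There is no real obstacle here; the only subtlety worth flagging is making sure that the last edge of the prefix, $x_{i^*-1}x_{i^*}$, is excluded from $E(C)$ as well, which follows from $x_{i^*-1}\notin V(C)$ (so this edge has an endpoint outside $C$ and therefore cannot be one of its cycle edges). Everything else is a straightforward unpacking of the definitions and an appeal to the partition observation.
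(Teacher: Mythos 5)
Your argument is correct and is precisely the intended justification: the paper states this as an observation without proof, and the natural way to verify it is exactly what you do — note that the prefix $x_1\dots x_{i^*}$ avoids $E(C)$ because each of its edges has an endpoint outside $V(C)$, so $u\in\subg(x_{i^*},C)$, and then uniqueness from Observation~\ref{sv-partition} forces $x_{i^*}=\g(u,C)$. Your flagged subtlety about the last prefix edge is handled correctly.
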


\begin{observation}
	\label{edge-entry}
	For any cycle $C$ in $G$ and for any $uv\in E(G)\setminus E(C)$, $\g(u,C)=\g(v,C)$.
\end{observation}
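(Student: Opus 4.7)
The plan is to argue directly from the definitions of $\subg$ and $\g$, using the fact that $uv$ is an edge not in $C$ and hence is a legal edge to extend a walk that already avoids $E(C)$.

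First I would let $w = \g(u,C)$, which by definition is the unique vertex of $V(C)$ such that $u \in \subg(w,C)$. By definition of $\subg(w,C)$, there is a walk (indeed a path) $P$ from $w$ to $u$ in $G$ that uses no edge of $C$. Since $uv \in E(G) \setminus E(C)$, appending the edge $uv$ to $P$ yields a walk from $w$ to $v$ in $G$ that still uses no edge of $C$. Hence $v$ is reachable from $w$ without using any edge of $C$, so $v \in \subg(w,C)$ as well.

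Finally I would invoke Observation~\ref{sv-partition}, which tells us that the sets $\{\subg(x,C):x\in V(C)\}$ partition $V(G)$. Consequently the vertex of $V(C)$ whose $\subg$-set contains $v$ is unique, and since we just showed both $\g(u,C)$ and $\g(v,C)$ qualify, they must coincide. This gives $\g(u,C) = \g(v,C)$, as desired.

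I do not expect any real obstacle here: the entire argument is a one-step reachability extension combined with the uniqueness guaranteed by Observation~\ref{sv-partition}. The only thing worth double-checking is that appending $uv$ to a walk avoiding $E(C)$ indeed still avoids $E(C)$, which is immediate from $uv \notin E(C)$; no use of the cactus structure beyond the previous observation is needed.
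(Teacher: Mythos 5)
Your argument is correct: extending the $E(C)$-avoiding walk from $\g(u,C)$ to $u$ by the edge $uv \notin E(C)$ shows $v \in \subg(\g(u,C),C)$, and uniqueness from Observation~\ref{sv-partition} finishes it. The paper states this observation without proof, and your reasoning is exactly the intended definitional argument, so there is nothing to add.
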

We now consider even cycles. For an edge $uv$ in an even cycle $C$, we define its \emph{opposite edge}, denoted by $\eeopp(uv)$, as the unique edge $xy\in E(C)$ such that $d(u,x)=d(v,y)$. Note that $\eeopp(\eeopp(e))=e$. Call the pair of edges $e$ and $\eeopp(e)$ an \emph{opposite pair}. Each even cycle $C$ has exactly $\frac{|C|}{2}$ opposite pairs. 

\begin{lemma}
	\label{even-edge}
	Let $C$ be an even cycle. 
	For any vertex $x\in V(G)$ and edge $uv\in E(C)$, either there is a shortest path between $x$ and $u$ that contains $uv$ or there is a shortest path between $x$ and $v$ that contains $uv$.
\end{lemma}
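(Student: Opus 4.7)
The plan is to let $w = \g(x, C)$ and reduce the lemma to a purely cyclic statement about $C$ that can be handled by parity.

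First, I would establish the decomposition $d(x,y) = d(x,w) + d_C(w,y)$ for every $y \in V(C)$, where $d_C$ denotes distance along $C$. The lower bound follows from Observation~\ref{path-entry} (every $x$-$y$ path enters $C$ at $w$) together with the fact that in a cactus the shortest path between two vertices of $C$ lies entirely inside $C$: any detour would leave the block $C$ through a cut vertex and return through the same one, strictly lengthening the path. The matching upper bound is witnessed by concatenating any shortest $x$-$w$ path $R$ in $G$ with any shortest $w$-$y$ path $Q$ inside $C$; these two pieces share only the vertex $w$, since $R$ lies in $\subg(w,C)$ and $\subg(w,C) \cap V(C) = \{w\}$ by Observation~\ref{sv-partition}, so $R$ followed by $Q$ is a genuine shortest path in $G$.

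It then suffices to show that either a shortest $w$-$u$ path in $C$ or a shortest $w$-$v$ path in $C$ uses the edge $uv$. If $w \in \{u,v\}$ the edge itself is such a path. Otherwise, the key parity fact is that in an even cycle no two adjacent vertices are equidistant from a given vertex $w$: each BFS level at depth $0 < d < |C|/2$ consists of the two vertices lying at positions $d$ and $|C|-d$ around $C$, whose cyclic distance is $|C|-2d \geq 2$, so they are never adjacent in $C$. Hence $d_C(w,u)$ and $d_C(w,v)$ differ by exactly one; if $d_C(w,v) = d_C(w,u)+1$, then appending the edge $uv$ to any shortest $w$-$u$ path in $C$ yields a shortest $w$-$v$ path through $uv$ (and when $u$ is antipodal to $w$ we simply pick the appropriate one of the two shortest $w$-$u$ paths).

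Combining the two steps, choosing $Q$ to contain $uv$ and concatenating with a shortest $x$-$w$ path produces the desired shortest $x$-$u$ or $x$-$v$ path. The main obstacle is the parity step: for an odd cycle the equidistant case genuinely arises (for instance in the $5$-cycle $w\,a\,u\,v\,b\,w$ the edge $uv$ lies on no shortest path from $w$ to $u$ nor from $w$ to $v$), which is exactly why the hypothesis that $C$ is even is essential. Everything else is routine bookkeeping on the block structure of cacti already captured by Observations~\ref{sv-partition}, \ref{path-entry}, and~\ref{edge-entry}.
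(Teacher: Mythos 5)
Your proof is correct and follows essentially the same route as the paper's: set $w=\g(x,C)$, use the parity of the even cycle to conclude that $w$ cannot be equidistant from $u$ and $v$, append $uv$ to the shorter of the two shortest paths inside $C$, and concatenate with a shortest $x$--$w$ path via Observation~\ref{path-entry}. You merely make explicit some bookkeeping that the paper leaves implicit (that graph distance between vertices of $C$ equals distance along $C$ in a cactus, and that the two concatenated pieces meet only at $w$).
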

\begin{proof}
	Let $w=\g(x,C)$.
	Then, $w$ cannot be equidistant from $u$ and $v$, because otherwise $C$ is an odd cycle.
	Suppose that $d(w,u)<d(w,v)$.
	Then a shortest path from $w$ to $u$ appended with the edge $uv$ gives a shortest path between $w$ and $v$.
	Now, due to Observation~\ref{path-entry}, if we append a shortest path between $x$ and $w$ with a shortest path between $w$ and $v$, we get a shortest path between $x$ and $v$.
	Thus there is a shortest path between $x$ and $v$ that contains $uv$. If $d(w,u)>d(w,v)$, then we get the other conclusion of the lemma.
\qed\end{proof}
We then consider odd cycles in more detail. For any edge $e$ in an odd cycle $C$, there is a unique vertex in $C$, which is equidistant from both endpoints of~$e$. We call this vertex the \emph{opposite vertex} of $e$ and denote it as $\vopp(e)$. We call $\os(e) = G[\subg(\vopp(e),C)]$ the \emph{opposite subgraph} of $e$. See Figure~\ref{fig:cactus}.

\begin{lemma}
	\label{odd-edge}
	Let $C$ be an odd cycle and $uv\in E(C)$.
	For any vertex $x\in V(G)\setminus  V(\os(uv)) $, either there is a shortest path between $x$ and $u$ that contains $uv$ or there is a shortest path between $x$ and $v$ that contains $uv$.
\end{lemma}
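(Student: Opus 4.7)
The argument will mirror Lemma~\ref{even-edge}. The plan is to let $w = \g(x,C)$ be the unique vertex of $C$ through which every shortest path from $x$ to a vertex of $C$ enters (Observation~\ref{path-entry}), show that $w \neq \vopp(uv)$, and then exploit the fact that in an odd cycle every vertex other than $\vopp(uv)$ is strictly closer to one endpoint of $uv$ than to the other.

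First I would translate the hypothesis $x \in V(G)\setminus V(\os(uv))$ into $x \notin \subg(\vopp(uv),C)$ and apply Observation~\ref{sv-partition}, which says that $\{\subg(y,C):y\in V(C)\}$ partitions $V(G)$, to conclude $w \neq \vopp(uv)$. Next, since $C$ is a block of the cactus $G$, any $w$-to-$u$ or $w$-to-$v$ path in $G$ that leaves $C$ must come back through the same vertex of $C$ and is therefore no shorter than the one that stays inside $C$; hence $d(w,u)$ and $d(w,v)$ equal the within-$C$ distances. Because $C$ is odd and $\vopp(uv)$ is, by definition, the only vertex of $C$ equidistant from $u$ and $v$, these two distances differ, and since $uv\in E(C)$ they differ by exactly one. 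Without loss of generality, $d(w,u) = d(w,v)-1$, so concatenating a shortest $w$-to-$u$ path inside $C$ with the edge $uv$ yields a shortest $w$-to-$v$ path.

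Finally, by Observation~\ref{path-entry}, any shortest $x$-to-$w$ path uses no edge of $C$, so prepending such a path to the shortest $w$-to-$v$ path constructed above produces a shortest $x$-to-$v$ path containing $uv$. If instead $d(w,v) < d(w,u)$, the symmetric argument produces a shortest $x$-to-$u$ path through $uv$. I do not expect any real obstacle; the only subtlety deserving care is the parity fact that every vertex of an odd cycle other than $\vopp(uv)$ is strictly closer to one endpoint of $uv$, which follows directly from the definition of $\vopp$.
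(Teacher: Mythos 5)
Your proposal is correct and follows the same route as the paper: set $w=\g(x,C)$, use the partition of Observation~\ref{sv-partition} to conclude $w\neq\vopp(uv)$, note that $w$ is therefore not equidistant from $u$ and $v$, and then reuse the concatenation argument from Lemma~\ref{even-edge}. The extra details you supply (distances realized within the block $C$, the parity of the distance gap) are exactly the points the paper delegates to Lemma~\ref{even-edge}, so there is no substantive difference.
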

\begin{proof}
	Let $w=\g(x,C)$.
	Since $x\notin V(\os(uv))$, $w\neq \vopp(uv)$.
	Hence, $w$ cannot be equidistant from $u$ and $v$.
	So, the same arguments as in Lemma~\ref{even-edge} complete the proof.
\qed\end{proof}


\begin{lemma} \label{opp-exists}
Let $e$ be any edge in an odd cycle of $G$ for which $\vopp(e)$ has degree more than~$2$. Then $\os(e)$ contains a bridge, or an even cycle, or an edge $e'$ in an odd cycle for which $\vopp(e')$ has degree~$2$.
\end{lemma}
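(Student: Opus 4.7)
The plan is to analyze $\os(e)$ via its block-cut tree and then apply a short counting argument. First I would record two preliminary facts. (i) $\os(e)=G[\subg(\vopp(e),C)]$ is a connected cactus that contains no edge of $C$: by Observation~\ref{sv-partition}, $\vopp(e)$ is the only vertex of $V(C)$ lying in $\subg(\vopp(e),C)$. (ii) For every $u\in V(\os(e))\setminus\{\vopp(e)\}$, $\deg_G(u)=\deg_{\os(e)}(u)$, because any $G$-neighbor $w$ of $u$ would, via a $C$-edge-avoiding path from $\vopp(e)$ to $u$ extended by the edge $uw\notin E(C)$, itself lie in $\subg(\vopp(e),C)$. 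A direct consequence is that every cycle of $G$ meeting $V(\os(e))\setminus\{\vopp(e)\}$ lies entirely in $\os(e)$, so ``bridge'' and ``odd/even cycle'' have the same meaning inside $\os(e)$ as inside $G$. Since $\deg_G(\vopp(e))>2$ and exactly two edges at $\vopp(e)$ lie in $E(C)$, $\os(e)$ contains at least one edge.

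I then dispose of the easy cases: if $\os(e)$ contains a bridge or an even cycle, the lemma holds. So I may assume that every block of $\os(e)$ is an odd cycle. My plan is to pick any leaf block $C'$ of the block-cut tree of $\os(e)$ (or $\os(e)$ itself when it has a single block) and let $v$ denote the unique cut vertex of $\os(e)$ lying on $C'$, if one exists. Every vertex of $C'$ other than $v$ has degree $2$ in $\os(e)$ and, by fact~(ii), also degree $2$ in $G$ whenever it is different from $\vopp(e)$. I would then choose an edge $e'\in E(C')$ with $\vopp(e')\notin\{v,\vopp(e)\}$; this $e'$ is the desired witness, since $e'$ lies in the odd cycle $C'\subseteq\os(e)$ and $\deg_G(\vopp(e'))=2$.

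The remaining issue---what I expect to be the main obstacle---is that such an $e'$ must always exist. Since $C'$ is an odd cycle with $|V(C')|\ge 3$ and the map $e'\mapsto\vopp(e')$ is a bijection $E(C')\to V(C')$, the forbidden vertices $v$ and $\vopp(e)$ exclude at most two edges of $C'$, leaving at least $|V(C')|-2\ge 1$ admissible choices. The degenerate configurations---$\os(e)$ being a single odd cycle (no cut vertex), $v=\vopp(e)$, or $\vopp(e)\notin V(C')$---only reduce the number of forbidden vertices, which strictly increases the number of admissible edges.
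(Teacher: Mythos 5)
Your proof is correct, but it takes a genuinely different route from the paper's. The paper argues by infinite descent: assuming the conclusion fails, it builds a sequence $e_1=e, e_2, \dots$ where each $e_{i+1}$ is an edge incident on $\vopp(e_i)$ outside the odd cycle $C_i$ containing $e_i$, observes that $\os(e_{i+1}) \subset \os(e_i)$ strictly, and derives a contradiction with the finiteness of $E(G)$. You instead localize the witness: after establishing that $\os(e)$ is an induced connected sub-cactus in which all vertices except $\vopp(e)$ retain their $G$-degree (so that bridges and odd/even cycles of $\os(e)$ coincide with those of $G$), you assume every block of $\os(e)$ is an odd cycle, take a leaf block $C'$ of the block-cut tree, and use the bijection $e'\mapsto\vopp(e')$ from $E(C')$ to $V(C')$ to find an edge whose opposite vertex avoids both the unique cut vertex of $C'$ and $\vopp(e)$, hence has degree~$2$. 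Your argument is longer because it must make explicit the transfer of the notions ``bridge,'' ``cycle,'' and ``degree'' between $\os(e)$ and $G$ (facts (i) and (ii)), which the paper's descent avoids by never leaving $G$; in exchange you get a constructive and slightly stronger statement (the witness edge can be located in any leaf block of $\os(e)$), and you make fully explicit the counting step ($|E(C')|\ge 3$ versus at most two forbidden vertices) that guarantees existence, whereas the paper's key claim $\os(e_{i+1})\subset\os(e_i)$ is asserted with minimal justification. Both proofs are sound.
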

\begin{proof}
Suppose this is not the case. We define a sequence $e_1,e_2,\dots$ of edges by the following procedure. Let $e_1=e$. Given $e_i$, we define $e_{i+1}$ as follows. By assumption and the definition of cactus graphs, $e_i$ is contained in an odd cycle, which we denote by $C_i$, and $\vopp(e_i)$ has degree more than~$2$. Choose $e_{i+1}$ as any edge incident on $\vopp(e_i)$ that is not in $C_i$. However, observe that $\os(e_{i+1}) \subset \os(e_i)$ by the choice of $e_{i+1}$. Hence, this is an infinite sequence, which contradicts the finiteness of $E(G)$.
\qed\end{proof}


\subsection{Properties of Very Strong Rainbow Colorings of Cactus Graphs}
We initially partition the edges of $G$ into three sets: $\et,$ $\eeven$, and $\eodd$.
The set $\et$ consists of those edges that are not in any cycle.
In other words, $\et$ is the set of bridges in $G$.
By definition, each of the remaining edges is part of exactly one cycle. 
We define $\eeven$ as the set of all edges that belong to an even cycle, and
$\eodd$ as the set of all edges that belong to an odd cycle. Note that $\et,$ $\eeven$, and $\eodd$ indeed induce a partition of $E(G)$.
We then partition $\eodd$ into two sets: $\eopp$ and $\erem$. An edge $e\in \eodd$ is in $\eopp$ if $\vopp(e)$ is not a degree-2 vertex and in $\erem$ otherwise. See Figure~\ref{fig:cactus}. We analyze each of these sets in turn, and argue how an optimal {\cvsrc} might color them. 

\begin{figure}[tbp]
\centering
	\includegraphics[scale=0.5]{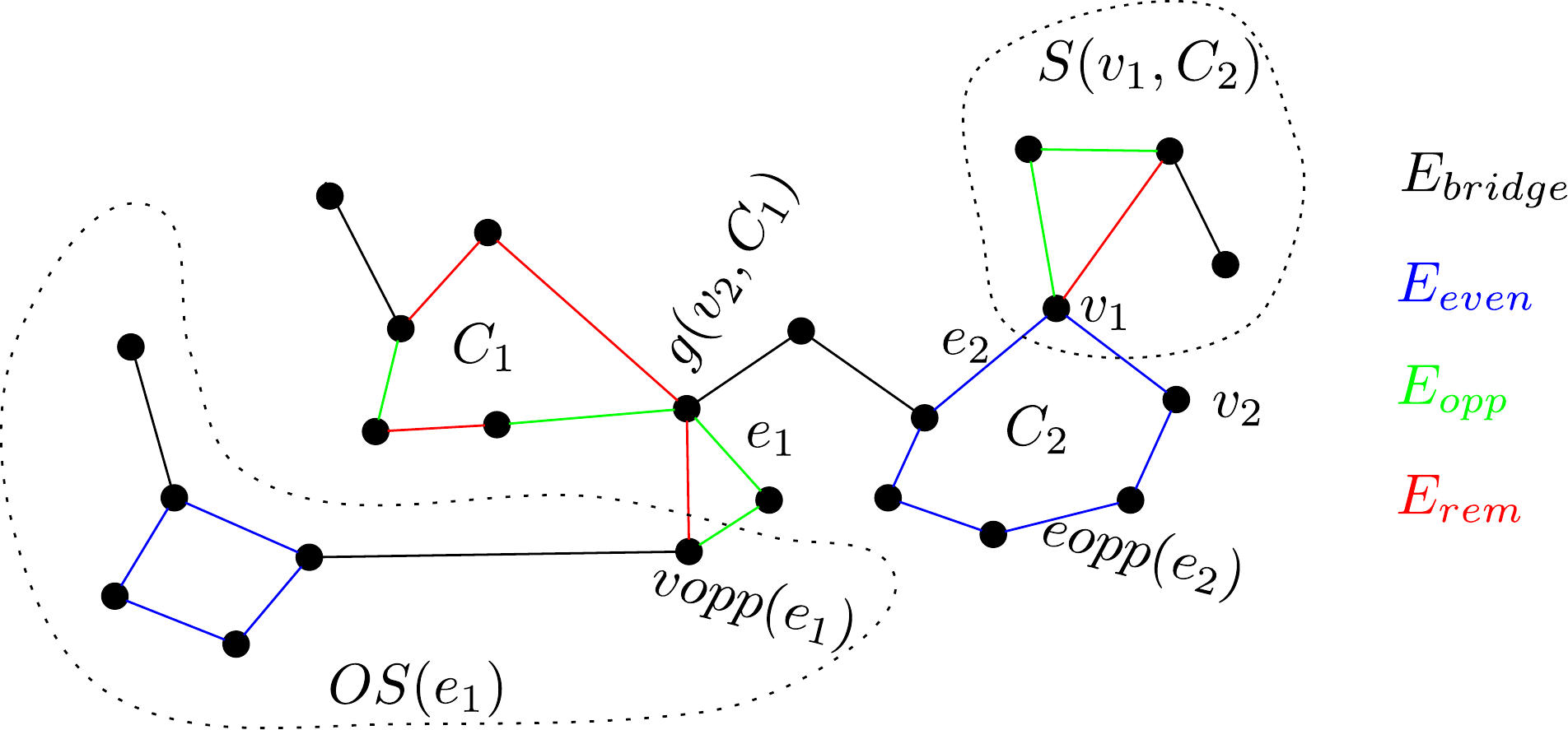}
	\caption{An example of a cactus graph and related definitions.} 
	\label{fig:cactus}	
\end{figure}

Two edges $e_1$ and $e_2$ are called \emph{conflicting} if there is a shortest path in the graph which contains both $e_1$ and $e_2$. Two conflicting edges must have different colors in any \cvsrc. We now exhibit several classes of conflicting pairs of edges.

\begin{lemma} \label{bridge-bridge}\footnote{This lemma holds for any graph, not necessarily cactus}
Any {\cvsrc} of $G$ colors the edges of $\et$ with distinct colors.
\end{lemma}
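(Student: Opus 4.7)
The plan is to show that any two distinct bridges are \emph{conflicting} in the sense defined right before the lemma; the conclusion then follows immediately since conflicting edges must receive distinct colors in any very strong rainbow coloring. Since conflict is a pairwise relation, establishing it for an arbitrary pair $e_1, e_2 \in \et$ suffices.

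To produce the required common shortest path, I would analyze the connectivity structure after removing the two bridges. Let $e_1=u_1v_1$ and $e_2=u_2v_2$ be two distinct bridges. First I would observe that in $G - e_1$, the two endpoints of $e_2$ must lie in the same component (since $e_2$ itself still connects them), so without loss of generality both $u_2$ and $v_2$ lie in the component $B$ of $G - e_1$ that contains $v_1$, while $u_1$ lies in the other component $A$. Now consider $G - \{e_1,e_2\}$: the set $A$ is unaffected, and $B$ splits into two components $B_u \ni u_2$ and $B_v \ni v_2$; without loss of generality $v_1 \in B_u$. Since $G$ is connected, removing $e_1$ and $e_2$ from $G$ yields exactly these three components $A, B_u, B_v$.

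Next, I would argue that any path in $G$ from $u_1$ to $v_2$ must use both $e_1$ and $e_2$: otherwise, removing the unused bridge would still leave $u_1$ and $v_2$ connected, contradicting that $u_1 \in A$ and $v_2 \in B_v$ lie in different components of $G - \{e_1,e_2\}$. In particular, \emph{every} shortest path from $u_1$ to $v_2$ uses both $e_1$ and $e_2$, which is exactly what is needed to conclude that $e_1$ and $e_2$ are conflicting. Since any very strong rainbow coloring must assign distinct colors to all edges of every shortest path, $e_1$ and $e_2$ receive distinct colors, completing the proof.

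I do not expect a serious obstacle; the only small subtlety is to be careful that the argument does not implicitly assume anything about the cactus structure (which is appropriate given the footnote stating that the lemma holds for arbitrary graphs). Verifying that $G - \{e_1, e_2\}$ has three components, regardless of whether $e_1$ and $e_2$ share an endpoint, is straightforward because removing a bridge from a connected graph always increases the number of components by exactly one.
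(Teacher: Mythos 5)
Your proposal is correct and takes essentially the same approach as the paper: both arguments choose one endpoint of each bridge lying on the ``far side'' of the other bridge and observe that every path (hence the shortest path) between those two endpoints must traverse both bridges, so the two edges of $\et$ conflict and must get distinct colors. One small wording slip: when ruling out a $u_1$--$v_2$ path that avoids only one bridge, the contradiction should be with the components of $G$ minus that \emph{single} unused bridge (where $u_1$ and $v_2$ are still separated), not with the components of $G-\{e_1,e_2\}$, since such a path need not lie in $G-\{e_1,e_2\}$; this is a one-line fix.
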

\begin{proof}
Consider $uv,xy\in \et$.	
We prove that $uv$ and $xy$ are conflicting, i.e. there is a shortest path in $G$  which contains both $uv$ and $xy$. 
Since $uv$ is a bridge, we can assume without loss of generality that any path between $u$ and $y$ uses the edge $uv$.
Similarly, since $xy$ is a bridge, we can assume without loss of generality that any path between $y$ and $u$ uses the edge $xy$.
Hence, the shortest path from $u$ to $y$ uses both $uv$ and $xy$. 
Hence, $uv$ and $xy$ are conflicting.
\qed\end{proof}

\begin{lemma}
	\label{bridge-even}
	Let $e_1\in \et$ and $e_2\in \eeven$. Then any {\cvsrc} of $G$ colors $e_1$ and $e_2$ with different colors.
\end{lemma}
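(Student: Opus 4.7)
The plan is to show that any bridge $e_1=uv \in \et$ and any edge $e_2=xy \in \eeven$ are conflicting, i.e., there is a shortest path of $G$ that contains both of them. Since any two conflicting edges must receive different colors in any \cvsrc, this suffices.

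First I would exploit the bridge structure. Let $C$ be the even cycle containing $e_2$. Because $C$ is a cycle, none of its edges is a bridge, so $uv \notin E(C)$; moreover, $C$ is entirely contained in one of the two connected components of $G - uv$. Without loss of generality, assume $V(C)$ lies in the component containing $v$, so that every path from $u$ to any vertex of $C$ must traverse the edge $uv$ (and in particular, starts with $uv$ when the path is simple and begins at $u$).

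Next I would invoke Lemma~\ref{even-edge} with the external vertex $u$ and the cycle edge $xy$: it guarantees that either there is a shortest path between $u$ and $x$ that contains $xy$, or there is a shortest path between $u$ and $y$ that contains $xy$. In either case, such a shortest path has $u$ as one endpoint and the other endpoint in $V(C)$. By the previous paragraph, this path must use the bridge $uv$. Hence one single shortest path of $G$ contains both $e_1=uv$ and $e_2=xy$, so $e_1$ and $e_2$ are conflicting and must receive different colors in any \cvsrc\ of $G$.

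I do not anticipate a real obstacle here: the even-cycle input is handled entirely by Lemma~\ref{even-edge}, and the bridge input reduces to the elementary observation that all of $V(C)$ lies on one side of $uv$. The only point that requires mild care is ensuring that the shortest path supplied by Lemma~\ref{even-edge} genuinely crosses the bridge; this is immediate because its endpoints lie on opposite sides of $uv$.
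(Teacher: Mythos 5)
Your proof is correct and follows essentially the same route as the paper's: pick the bridge endpoint on the opposite side of the cycle so that every path from it to $V(C)$ crosses the bridge, then apply Lemma~\ref{even-edge} to get a shortest path ending at that vertex which contains the cycle edge. The only difference is that you spell out more carefully why the whole cycle lies on one side of the bridge, which the paper leaves implicit.
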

\begin{proof}
Let $C$ be the cycle containing $e_2$. Let $e_1=xy$ and $e_2=uv$.
	Since $xy$ is a bridge, we can assume w.l.o.g.\ that any path from $x$ to any vertex in $C$ contains $xy$.
	Due to Lemma~\ref{even-edge}, we can assume w.l.o.g.\ that there is a shortest path from $x$ to $v$ that contains $uv$.
	Thus we have a shortest path which contains both $uv$ and $xy$, which means that $uv$ and $xy$ are conflicting.
\qed\end{proof}


\begin{observation} \label{base-even}
Let $e_1$ and $e_2$ be edges in an even cycle $C$ of $G$ such that $e_1 \not= \eeopp(e_2)$. Then any {\cvsrc} of $G$ colors $e_1$ and $e_2$ with different colors.
\end{observation}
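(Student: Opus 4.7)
Let $C$ be the even cycle of length $2k$ containing $e_1 = uv$ and $e_2 = xy$. Removing $e_1$ and $e_2$ from $C$ splits its edge set into two arcs of lengths $a$ and $b$ with $a + b = 2k - 2$. Since $e_1 \neq \eeopp(e_2)$, we have $a \neq b$; and since $a + b$ is even, $a$ and $b$ share parity, so WLOG $a + 2 \le b$. I relabel endpoints so that the short arc connects $v$ to $x$ and the long arc connects $y$ to $u$.

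The plan is to exhibit a shortest path from $u$ to $y$ in $G$ that contains both $e_1$ and $e_2$. Let $P^*$ be the path that goes $u \to v$ via $e_1$, then along the short arc from $v$ to $x$, and then $x \to y$ via $e_2$; its length is $a + 2$. The only other simple $u$-to-$y$ path in $C$ is the long arc, of length $b \ge a + 2$, so $P^*$ is a shortest such path within $C$. To upgrade this to a shortest path in all of $G$, I invoke the cactus structure: any walk leaving the block $C$ at a cut vertex $w$ must re-enter $C$ through that same $w$, so no detour outside $C$ can shorten a path between two vertices of $C$. Hence $P^*$ is a shortest path in $G$ from $u$ to $y$, and since it uses both $e_1$ and $e_2$, these edges are conflicting and must receive distinct colors in any {\cvsrc}.

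The one delicate step is the parity argument upgrading $a < b$ to $a + 2 \le b$; without it $P^*$ could be strictly longer than the long arc. The degenerate case where $e_1$ and $e_2$ share a vertex (so $a = 0$) requires no separate treatment, since $b \ge 2$ holds automatically for any even cycle, which has length at least $4$.
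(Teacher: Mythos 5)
Your proof is correct: the parity argument upgrading $a \neq b$ to $a+2 \le b$, combined with the fact that in a cactus no simple path between two vertices of the cycle $C$ can leave the block $C$ (it would have to re-enter through the same cut vertex), correctly exhibits a shortest path containing both $e_1$ and $e_2$. The paper states this as an observation without proof, and your argument supplies exactly the intended justification, so there is nothing to add.
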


\begin{lemma}
	\label{even-even}
	Let $e_1$ and $e_2$ be edges in two different even cycles $C_1$ and $C_2$ of $G$. 
	Then any {\cvsrc} of $G$ colors $uv$ and $xy$ with different colors.
\end{lemma}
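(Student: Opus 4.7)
The plan is to show that $e_1$ and $e_2$ are conflicting, in the sense introduced before Lemma~\ref{bridge-bridge}: we exhibit a single shortest path in $G$ that contains both edges, which forces any \cvsrc{} to assign them distinct colors. Write $e_1=u_1v_1$ and $e_2=u_2v_2$. By Observation~\ref{edge-entry}, the values $w_1 := \g(u_2,C_1) = \g(v_2,C_1)$ and $w_2 := \g(u_1,C_2) = \g(v_1,C_2)$ are well-defined; intuitively, $w_1$ is the vertex at which any path from $C_2$ enters $C_1$, and symmetrically for $w_2$.

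First I would apply Lemma~\ref{even-edge} to the vertex $w_2$ and the edge $e_1 \in E(C_1)$. After possibly swapping the labels of $u_1$ and $v_1$, this yields a shortest path from $w_2$ to $v_1$ that uses $e_1$. By Observation~\ref{path-entry}, this path enters $C_1$ at $\g(w_2,C_1)=w_1$, so it splits as a shortest segment from $w_2$ to $w_1$ followed by a shortest segment from $w_1$ to $v_1$ that lies inside $C_1$ and traverses $e_1$. Symmetrically, applying Lemma~\ref{even-edge} to $w_1$ and $e_2$ (relabeling the endpoints of $e_2$ if needed) produces a shortest path from $w_1$ to $v_2$ that first crosses to $\g(w_1,C_2)=w_2$ and then continues inside $C_2$ using $e_2$.

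Next I would concatenate the shortest path from $v_1$ to $w_1$ inside $C_1$ (through $e_1$) with the shortest path from $w_1$ to $v_2$ just obtained (through $e_2$), producing a walk from $v_1$ to $v_2$ of length $d(v_1,w_1)+d(w_1,v_2)$ that traverses both $e_1$ and $e_2$. To verify that this walk is itself a shortest $v_1,v_2$-path in $G$, I would invoke Observation~\ref{path-entry} on any path from $v_1$ to $v_2$: such a path must leave $C_1$ at $w_1$ and enter $C_2$ at $w_2$, so its length is at least $d(v_1,w_1)+d(w_1,w_2)+d(w_2,v_2)=d(v_1,w_1)+d(w_1,v_2)$. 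Equality holds, so the concatenation is indeed a shortest path containing both $e_1$ and $e_2$.

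The main obstacle is essentially bookkeeping: one must confirm that the endpoint choices produced by the two independent invocations of Lemma~\ref{even-edge} are mutually compatible and that the bridge segment between $w_1$ and $w_2$ is common to the combined path. Both facts follow cleanly from Observation~\ref{path-entry}, which pins down the entry vertex into each cycle. Degenerate configurations --- $C_1$ and $C_2$ sharing a cut vertex (which forces $w_1=w_2$ and an empty bridge segment), or some $w_i$ coinciding with an endpoint of $e_i$ --- are handled by the same equalities and need no separate argument.
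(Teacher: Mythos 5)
Your proposal is correct and follows essentially the same route as the paper's proof: it establishes that $e_1$ and $e_2$ are conflicting by applying Lemma~\ref{even-edge} once for each cycle (anchored at the entry vertices supplied by Observation~\ref{edge-entry}) and gluing the resulting shortest paths together via Observation~\ref{path-entry}. The only differences are cosmetic --- you concatenate two segments rather than the paper's three ($P_1\cup P_3\cup P_2$) and you spell out the distance computation certifying that the concatenation is shortest, which the paper leaves implicit.
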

\begin{proof}
Let $e_1 = uv$ and $e_2 = xy$.
	Let $z=\g(u,C_2)$ and $w=\g(x,C_1)$.
	By Observation~\ref{edge-entry}, $\g(v,C_2)=z$ and $\g(y,C_1)=w$.
	Due to Lemma~\ref{even-edge}, we can assume w.l.o.g.\ that there is a shortest path $P_1$ between $z$ and $x$ containing $xy$ and that there is a shortest path $P_2$ between $w$ and $u$ containing $uv$.
	Let $P_3$ be a shortest path between $w$ and $z$.
	Then $P_1\cup P_3 \cup P_2$ gives a shortest path between $u$ and $x$ that contains both $uv$ and $xy$.
	Hence, $e_1$ and $e_2$ are conflicting.
\qed\end{proof}

\begin{lemma} \label{erem-no-repeat}
Let $e_1\in \et\cup \eeven$ and $e_2\in \erem$. Then any {\cvsrc} of $G$ colors $e_1$ and $e_2$ with different colors.
\end{lemma}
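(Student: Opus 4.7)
The plan is to show in each case that $e_1$ and $e_2$ are \emph{conflicting}, i.e.\ that there is a shortest path of $G$ containing both edges, which by definition forces any {\cvsrc} to give them distinct colors. Write $e_2 = uv$ and let $C_2$ be the odd cycle containing $e_2$ with $w = \vopp(e_2)$ a degree-$2$ vertex. The first key observation is that, since $w$ has degree $2$ and both its incident edges lie in $C_2$, we have $\subg(w,C_2) = \{w\}$, and hence $V(\os(uv)) = \{w\}$. In particular, no edge outside $C_2$ is incident on $w$, so no bridge and no edge of any other cycle can touch $w$; this implies immediately that neither endpoint of $e_1$ can equal $w$.

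Case 1: $e_1 = xy \in \et$. Since $xy$ is a bridge, its removal splits $G$ into two components. The cycle $C_2$ must lie entirely in one of them (its edges are not bridges), so without loss of generality $V(C_2)$ (in particular $u,v,w$) lies in the component containing $x$, and $y$ lies in the other component. Thus any path from $y$ to either $u$ or $v$ must use the bridge $xy$. Applying Lemma~\ref{odd-edge} to $y$ (legal because $y \ne w$), we obtain a shortest path from $y$ to, say, $v$ containing $uv$; this path must begin with the edge $xy$, so it contains both $e_1$ and $e_2$.

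Case 2: $e_1 = xy \in \eeven$, lying on some even cycle $C_1 \ne C_2$. Since $xy \notin E(C_2)$, Observation~\ref{edge-entry} gives $\g(x,C_2) = \g(y,C_2) =: z$, and $z \ne w$ because $x \ne w$ (if $x \ne w$, then $x \notin \subg(w,C_2) = \{w\}$). Apply Lemma~\ref{odd-edge} to $z$ (allowed since $z \notin V(\os(uv))$) to obtain, WLOG, a shortest path $P_1$ from $z$ to $v$ containing $uv$. Apply Lemma~\ref{even-edge} to $z$ and the edge $xy \in E(C_1)$ to obtain, WLOG, a shortest path $P_2$ from $z$ to $y$ containing $xy$. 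Concatenating the reverse of $P_2$ with $P_1$ yields a walk from $y$ to $v$ of length $d(y,z) + d(z,v)$, and by Observation~\ref{path-entry} every shortest $y$-to-$v$ path enters $C_2$ through $\g(y,C_2) = z$, so $d(y,v) = d(y,z) + d(z,v)$. Hence the concatenated walk is a shortest path and contains both $xy$ and $uv$.

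The only real subtlety — the part I expect to be the main obstacle — is handling the symmetric ``WLOG'' choices in Lemmas~\ref{odd-edge} and~\ref{even-edge} consistently so that the two paths can be glued at $z$; this is exactly where the equality $\g(x,C_2) = \g(y,C_2)$ (from Observation~\ref{edge-entry}) and the fact that $\os(uv)$ collapses to the single vertex $w$ become essential. Once the coincidence of entry vertices is established and $z \ne w$ is verified, the gluing is forced by Observation~\ref{path-entry}, and the argument closes.
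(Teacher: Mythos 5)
Your proof is correct and follows essentially the same strategy as the paper's: show that $e_1$ and $e_2$ are conflicting by combining Lemma~\ref{odd-edge} (applicable because the degree-$2$ condition on $\vopp(e_2)$ forces the relevant entry vertex into the odd cycle to differ from $\vopp(e_2)$) with Lemma~\ref{even-edge} and the gluing observations. The only cosmetic difference is that you observe $V(\os(e_2))=\{\vopp(e_2)\}$ and glue at the single entry vertex $z=\g(y,C_2)$, whereas the paper argues via $\g(x,C)$ having degree at least~$3$ and, in the even-cycle case, concatenates three subpaths; both routes are sound.
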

\begin{proof}
Let $e_1=xy$ and $e_2=uv$, let $C$ be the odd cycle containing $e_2$, and let $w=\g(x,C)$. By Observation~\ref{edge-entry}, $w=\g(y,C)$. In other words, $x,y\in \subg(w,C)$. Note that $w$ is not a degree-$2$ vertex, because there are at least two vertices in $\subg(w,C)$.	Hence, $w\neq \vopp(uv)$ by the definition of $\erem$. Hence, by Lemma~\ref{odd-edge}, w.l.o.g.\ there is a shortest path $P_1$ from $w$ to $u$ that contains $uv$.

We now consider two cases, depending on whether $e_1 \in \et$ or $e_1 \in \eeven$. First, suppose that $e_1\in \et$. Since $xy$ is a bridge, we can assume w.l.o.g.\ that any shortest path from $x$ to $w$ contains $xy$. Let $P_2$ be such a shortest path. By Observation~\ref{path-entry}, if we append a shortest path from $x$ to $w$ with a shortest path from $w$ to $u$, we get a shortest path from $x$ to $u$. Thus, $P_1\cup P_2$ is a shortest path from $x$ to $u$ containing $xy$ and $uv$. Hence, $e_1$ and $e_2$ are conflicting.

Suppose that $e_1\in \eeven$. Let $C'$ be the even cycle containing $e_1$.
Let $z=\g(v,C')$.
	From Lemma~\ref{even-edge}, we can assume w.l.o.g.\ that there is a shortest path from $z$ to $x$ that contains $xy$.
	Let this shortest path be $P_3$.
Let $P_4$ be a shortest path between $w$ and $z$.
By Observation~\ref{path-entry}, $P_3\cup P_4\cup P_1$ is a shortest path between $x$ and $u$ that contains $xy$ and $uv$. Hence, $e_1$ and $e_2$ are conflicting.
\qed\end{proof}

\begin{lemma} 
	\label{erem-oddcycle}
Let $C_1$ and $C_2$ be two distinct odd cycles and let $e_1 \in E(C_1)\cap \erem$ and $e_2 \in E(C_2)\cap \erem$. Then any {\cvsrc} of $G$ colors $e_1$ and $e_2$ with different colors.
\end{lemma}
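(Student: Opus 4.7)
The plan is to exhibit a single shortest path of $G$ that uses both $e_1$ and $e_2$; by the definition of conflicting edges, this will force any {\cvsrc} to assign them distinct colors. Write $e_1=uv$ and $e_2=xy$, set $w_1=\g(x,C_1)$ and $w_2=\g(u,C_2)$, and note that by Observation~\ref{edge-entry} we also have $w_1=\g(y,C_1)$ and $w_2=\g(v,C_2)$.

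The first real step is to verify that we can actually invoke Lemma~\ref{odd-edge}, i.e., that $w_1\notin V(\os(e_1))$ and $w_2\notin V(\os(e_2))$. This is precisely where the $\erem$ hypothesis gets used. Since $e_1\in \erem$, $\vopp(e_1)$ has degree~$2$ in $G$, and both its neighbors lie on $C_1$, so $\subg(\vopp(e_1),C_1)=\{\vopp(e_1)\}$ and hence $V(\os(e_1))=\{\vopp(e_1)\}$. If $w_1=\vopp(e_1)$, then $x\in\subg(\vopp(e_1),C_1)=\{\vopp(e_1)\}$, forcing $x=\vopp(e_1)$; but then the two edges of $C_2$ incident to $x$ would have to coincide with the two edges of $C_1$ incident to $\vopp(e_1)$, contradicting the cactus property that each edge lies in at most one cycle. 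So $w_1\neq \vopp(e_1)$, and symmetrically $w_2\neq \vopp(e_2)$.

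Now apply Lemma~\ref{odd-edge} in each cycle: there is a shortest path $P_1$ between $w_1$ and some endpoint of $e_1$ containing $e_1$; WLOG this endpoint is $u$. Analogously, a shortest path $P_2$ between $w_2$ and (WLOG) $x$ contains $e_2$. Let $Q$ be any shortest path from $w_1$ to $w_2$, and consider the walk $W = P_1 \cup Q \cup P_2$, which has length $d(u,w_1)+d(w_1,w_2)+d(w_2,x)$ and contains both $e_1$ and $e_2$.

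It remains to show that $W$ is actually a shortest path from $u$ to $x$. Apply Observation~\ref{path-entry} twice. First, since $u\in\subg(w_2,C_2)$, any shortest path from $u$ to $x\in V(C_2)$ enters $C_2$ at $w_2$, so $d(u,x)=d(u,w_2)+d(w_2,x)$. Second, one checks that $w_2\in\subg(w_1,C_1)$: indeed $x\in\subg(w_1,C_1)$ by definition of $w_1$, and $w_2$ is reachable from $x$ along $C_2$, whose edges are disjoint from $E(C_1)$ by the cactus property. Hence any shortest path from $u\in V(C_1)$ to $w_2$ enters $C_1$ at $w_1$, giving $d(u,w_2)=d(u,w_1)+d(w_1,w_2)$. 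Combining, the length of $W$ equals $d(u,x)$; a walk of length equal to the distance between its endpoints must be a simple path, so $W$ is a shortest $u$--$x$ path that uses both $e_1$ and $e_2$, and $e_1,e_2$ are conflicting. The only genuinely delicate part of the argument is the first step, using the $\erem$ assumption together with the ``one edge, one cycle'' property of cacti to prevent $w_1$ from coinciding with $\vopp(e_1)$; everything else is routine concatenation of shortest paths governed by Observation~\ref{path-entry}.
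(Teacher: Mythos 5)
Your proof is correct and follows essentially the same route as the paper's: apply Lemma~\ref{odd-edge} at the two entry points $\g(\cdot,C_1)$ and $\g(\cdot,C_2)$, concatenate with a shortest path between them, and invoke Observation~\ref{path-entry} to certify the result is a shortest path containing both edges. The only (equally valid) variation is in ruling out $w_1=\vopp(e_1)$: you argue $\os(e_1)$ is a singleton and use the one-edge-one-cycle property, whereas the paper simply notes that the entry point has degree at least $3$ and so cannot be the degree-$2$ vertex $\vopp(e_i)$.
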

\begin{proof}
Let $e_1=xy$ and $e_2=uv$, and let $w=\g(x,C_2)$. By Observation~\ref{edge-entry}, $w=\g(y,C_2)$. Let $z=\g(u,C_1)$. By Observation~\ref{edge-entry}, $z=\g(v,C_1)$. That is, $x,y\in \subg(w,C_2)$ and $u,v\in \subg(z,C_1)$. Note that $w$ and $z$ are not degree-$2$ vertices, because there are at least two vertices in $\subg(w,C_2)$ and $\subg(z,C_1)$. Hence, $w\neq \vopp(uv)$ and $z\neq \vopp(xy)$ by the definition of $\erem$.	Hence, by Lemma~\ref{odd-edge}, we can assume w.l.o.g.\ that there is a shortest path $P_1$ from $u$ to $w$ that contains $uv$ and there is a shortest path $P_2$ from $z$ to $x$ that contains $xy$.
	Let $P_3$ be a shortest path from $w$ to $z$.
By Observation~\ref{path-entry}, 
$P_1\cup P_2\cup P_3$ is a shortest path from $x$ to $u$ containing $xy$ and $uv$. Hence, $e_1$ and $e_2$ are conflicting.
\qed\end{proof}

Finally, we prove the existence of some non-conflicting pairs of edges.

\begin{lemma} \label{opp-valid}
For any $e_1\in \eopp$ and $e_2\in \os\left( e_1 \right)$, $e_1$ and $e_2$ are not conflicting.
\end{lemma}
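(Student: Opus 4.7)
The plan is to assume for contradiction that a shortest path $P$ with endpoints $s$ and $t$ contains both $e_1=uv$ and some edge $e_2\in\os(e_1)$; let $C$ be the odd cycle containing $e_1$ and set $w=\vopp(e_1)$. The key structural observation is that both endpoints of $e_2$ lie in $\subg(w,C)$ while neither $u$ nor $v$ does: Observation~\ref{sv-partition} says that $\{\subg(x,C):x\in V(C)\}$ partitions $V(G)$, and since $u\in\subg(u,C)$ and $v\in\subg(v,C)$, both are disjoint from $\subg(w,C)$. Observation~\ref{path-entry} then guarantees that any path between a vertex inside $\subg(w,C)$ and one outside must pass through $w$.

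Next I would case-analyze where $s$ and $t$ lie relative to $\subg(w,C)$. If both are inside $\subg(w,C)$, then to traverse $uv$ the path $P$ must leave and re-enter $\subg(w,C)$, forcing it to visit $w$ twice and contradicting the simplicity of a shortest path. The symmetric case of both $s,t$ outside $\subg(w,C)$ is ruled out the same way using $e_2$. Hence exactly one endpoint, say $s$, lies outside and $t$ lies inside, so $P$ crosses $w$ exactly once. Splitting $P=P_1\cdot P_2$ at $w$, the prefix $P_1$ is a shortest path from $s$ to $w$ containing $uv$, while the suffix $P_2$ is a shortest path from $w$ to $t$ containing $e_2$ (both are shortest paths since subpaths of shortest paths are shortest).

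The crux is to refute the existence of $P_1$ by showing that no shortest path in $G$ from any vertex to $w=\vopp(uv)$ can use the edge $uv$. Let $r=\g(s,C)$; by Observation~\ref{path-entry} every $s$--$w$ path passes through $r$, so $d(s,w)=d(s,r)+d(r,w)$ and any shortest $s$--$w$ path factors as a shortest $s$--$r$ path followed by a shortest $r$--$w$ path. Since $C$ is a block of the cactus graph $G$, the shortest $r$--$w$ path lies entirely within $C$. Writing $C$ as $w=w_0,w_1,\dots,w_{2k}$ around the cycle with $uv=w_kw_{k+1}$, the main obstacle is to verify, by a short case analysis on the position of $r$ (including $r=u$ and $r=v$), that the unique shorter of the two $r$--$w$ arcs always moves monotonically toward $w_0$ and never traverses $uv$. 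The oddness of $|C|$ is essential here, since it guarantees that the two arcs from $r$ to $w$ have different lengths and pins down which direction is shorter; once this is settled, it contradicts that $P_1$ is a shortest $s$--$w$ path using $uv$, completing the proof.
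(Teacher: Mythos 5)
Your proof is correct and follows essentially the same route as the paper's: both arguments hinge on the fact that any shortest path containing $e_1$ and an edge of $\os(e_1)$ must pass through $w=\vopp(e_1)$ (via Observation~\ref{path-entry} and the partition of Observation~\ref{sv-partition}), and that no shortest path ending at $w$ can traverse $e_1=uv$. The paper dispatches that last step in one line --- since $w$ is equidistant from $u$ and $v$, a shortest path to $w$ using $uv$ would have length at least $d(\cdot,u)+1+d(v,w)=d(\cdot,u)+1+d(u,w)$, exceeding the triangle inequality --- so your arc-by-arc case analysis on the position of $r$ in the cycle is sound but can be replaced by this shorter equidistance argument.
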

\begin{proof}
Let $e_1=uv$, $e_2=xy$, and let $C$ be the odd cycle containing $e_1$. For sake of contradiction, suppose that $uv$ and $xy$ are conflicting.	Assume w.l.o.g.\ that there is a shortest path $P$ from $x$ to $v$ which contains $uv$ and $xy$. From Observation~\ref{path-entry}, $P$ contains a subpath $P'$ from $\g(x,C)$ to $v$. Clearly, $P'$ contains $uv$. Also, $\g(x,C)=\vopp(uv)$, because $x\in \os(uv)$. However, recall that $\vopp(uv)$ is equidistant from $u$ and $v$. Hence, any shortest path from $\vopp(uv)$ to $v$ does not contain $uv$, which contradicts the existence of $P'$.
\qed\end{proof}

\subsection{Algorithm}
Based on the results of the previous two subsections, we now describe the algorithm for cactus graphs.
First, we color the edges of $\et$ with unique colors. By Lemma~\ref{bridge-bridge}, no VSRC can use less colors to color $\et$.

Next, we color the edges in $\eeven$ using colors that are distinct from those we used before. This will not harm the optimality of the constructed coloring, because of Lemma~\ref{bridge-even}. Moreover, we use different colors for different even cycles, which does not harm optimality by Lemma~\ref{even-even}. We then introduce a set of $\frac{|C|}{2}$ new colors for each even cycle $C$. For an opposite pair, we use the same color, and we color each opposite pair with a different color. Thus we use $\frac{|C|}{2}$ colors for each even cycle $C$. By Observation~\ref{base-even}, no VSRC can use less colors to color $C$.

Next, we will color the edges in $\erem$ using colors that are distinct from those we used before. This will not harm the optimality of the constructed coloring, because of Lemma~\ref{erem-no-repeat}. For each odd cycle, we use a different set of colors. This will not harm the optimality of the constructed coloring, because of Lemma~\ref{erem-oddcycle}.

For each odd cycle $C$, we construct an auxiliary graph $H_C$ for $\erem\cap C$ as follows. Let $V(H_C)=\erem\cap C$ and let $E(H_C)=\{ e_1e_2 : e_1,e_2\in V(H_C)$; $e_1$ and $e_2$ are not conflicting in $G \}$.

\begin{lemma}\label{conflict-bip}
$\Delta(H_C)\le 2$.
\end{lemma}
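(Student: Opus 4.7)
The plan is to fix the odd cycle $C$, label $V(C)=\{v_0,\ldots,v_{L-1}\}$ cyclically with $L=|V(C)|=2k+1$, and set $f_i = v_i v_{(i+1) \bmod L}$. I will show that every edge $f \in V(H_C)$ is non-conflicting with at most two other edges of $E(C)$; since $V(H_C) \subseteq E(C)$, this yields $\Delta(H_C) \le 2$.

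The key structural observation is that, in the cactus $G$, $C$ is a block, so every path in $G$ between two vertices of $C$ lies entirely inside $C$, and in particular any simple path in $G$ using edges of $C$ traverses a contiguous arc of $C$. Since $L$ is odd, the two arcs of $C$ between any pair of distinct vertices have different lengths, so the shortest path in $G$ between any two vertices of $C$ is the unique shorter arc. Consequently, $f_i$ and $f_j$ are conflicting if and only if some contiguous arc of $C$ containing both is the shorter of the two arcs of $C$ between its endpoints.

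For $0 \le i < j \le L-1$, there are exactly two minimal arcs of $C$ containing both $f_i$ and $f_j$: (a) the arc from $v_i$ to $v_{j+1}$ through $v_{i+1},\ldots,v_j$, of length $j-i+1$; and (b) the arc from $v_{i+1}$ to $v_j$ that wraps through $v_i,v_{i-1},\ldots,v_0,v_{L-1},\ldots,v_{j+1}$, of length $L+1-(j-i)$. Any other arc containing both is a strict extension of one of these, with strictly larger length while its complementary arc is strictly shorter; so no extension is a shortest arc unless the corresponding minimal one already is. A direct comparison gives that arc (a) is a shortest arc iff $j-i+1 \le L-(j-i+1)$, i.e.\ $j-i \le k-1$, and arc (b) is a shortest arc iff $L+1-(j-i) \le (j-i)-1$, i.e.\ $j-i \ge k+2$. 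Therefore $f_i$ and $f_j$ are non-conflicting iff $j-i \in \{k, k+1\}$, which identifies at most two non-conflicting partners per edge of $C$, namely $f_{(i+k)\bmod L}$ and $f_{(i+k+1)\bmod L}$ (the two edges of $C$ incident to $\vopp(f_i)$).

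The main subtlety is justifying that shortest paths with endpoints outside $C$ contribute no additional conflicts between edges of $C$. This is immediate from the block property above: the within-$C$ portion of any such shortest path is itself a contiguous arc of $C$ and must be a shortest arc between its two entry/exit vertices on $C$, which we have just characterized.
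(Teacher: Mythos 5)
Your proof is correct and follows essentially the same route as the paper, which simply asserts as "easy to observe" that each edge of an odd cycle has only two non-conflicting partners; you supply the full justification, including the arc-length computation identifying those partners as the two edges incident to $\vopp(f_i)$ and the reduction of shortest paths with endpoints outside $C$ to shortest arcs within $C$. No gaps.
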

\begin{proof}
It is easy to observe that in any odd cycle $C$, for any $e\in E(C)$, there are only two other edges in $C$ that are not conflicting with $e$.
\qed\end{proof}

Let $M_C$ be a maximum matching of $H_C$. We can compute $M_C$ in linear time, since $\Delta(H_C) \leq 2$. For an $e_1e_2\in M_C$, color $e_1$ and $e_2$ with the same, new color.
Then color each $e\in \erem\cap C$ that is unmatched in $M_C$, each using a new color.

\begin{lemma}
	\label{erem-procedure}
The procedure for coloring $\erem \cap C$ gives a coloring of the edges in $\erem\cap C$ such that no conflicting edges are colored the same. Moreover, no {\cvsrc} of $G$ can use less colors to color $\erem\cap C$ than used by the above procedure.
\end{lemma}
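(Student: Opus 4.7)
The plan is to prove the two claims of the lemma separately: first, that the procedure produces a valid coloring in which no conflicting pair in $\erem \cap C$ receives the same color, and second, that the procedure uses the minimum number of colors on $\erem \cap C$ among all very strong rainbow colorings of $G$.

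For validity, I would observe that the procedure assigns two edges of $\erem \cap C$ the same color only when they appear together as a pair $e_1 e_2 \in M_C$. Since $M_C \subseteq E(H_C)$ and $E(H_C)$ by definition consists exactly of pairs of non-conflicting edges, any two equally colored edges are non-conflicting. Hence no conflicting pair in $\erem \cap C$ receives the same color.

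For optimality, I would fix an arbitrary very strong rainbow coloring $\mu$ of $G$. Since conflicting edges must receive distinct colors under $\mu$, each color class of $\mu$ restricted to $\erem \cap C$ is a set of pairwise non-conflicting edges and hence induces a clique in $H_C$. The crucial step is to argue that every clique in $H_C$ has at most two vertices. Given this, the color classes of $\mu$ restricted to $\erem \cap C$ form a partition into pairs and singletons, where the pairs constitute a matching $M$ in $H_C$. Thus the number of colors used on $\erem \cap C$ by $\mu$ equals $|\erem \cap C| - |M|$, which is at least $|\erem \cap C| - |M_C|$ since $M_C$ is a maximum matching of $H_C$. This matches the number of colors used by the procedure.

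The main obstacle will be establishing the bound on clique sizes in $H_C$. Since $\Delta(H_C) \leq 2$ by Lemma~\ref{conflict-bip}, a clique has at most three vertices, so only the possibility of a triangle must be ruled out. I would analyze three edges of the odd cycle $C$ directly: writing $C = v_0 v_1 \dots v_{2k}$, two edges $v_i v_{i+1}$ and $v_j v_{j+1}$ of $C$ are non-conflicting in $G$ only if their cyclic edge distance along $C$ is at least $k$, since otherwise the shortest path within $C$ from $v_i$ to $v_{j+1}$ already contains both. A short arithmetic argument on the cyclic positions then shows that three edges of $C$ cannot pairwise satisfy this distance constraint unless $k = 1$, i.e.\ $|C| = 3$. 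The degenerate case $|C| = 3$ would be handled separately by a brief case analysis on the number of degree-$2$ vertices in $V(C)$, which constrains $|\erem \cap C|$ and collapses into small base cases.
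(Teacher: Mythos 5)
Your proposal follows essentially the same route as the paper's proof. The validity claim is handled identically (equal colors arise only from pairs in $M_C$, which are non-conflicting by the definition of $H_C$), and the optimality claim rests on the same two pillars: any very strong rainbow coloring uses each color on at most two edges of $\erem\cap C$, and the size-two color classes then form a matching in $H_C$ of size at most $|M_C|$, giving the lower bound $|\erem\cap C|-|M_C|$. Where you differ is in how the first pillar is justified: the paper simply asserts that the two edges of $C$ not conflicting with a given edge $e$ must conflict with each other, whereas you bound the clique number of $H_C$ via an arithmetic analysis of cyclic edge distances (two edges of a $(2k+1)$-cycle are non-conflicting iff their positions differ by $k$ or $k+1$, and three positions cannot pairwise satisfy this unless $k=1$). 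Your version is more careful and correctly isolates $|C|=3$ as the only case where $H_C$ can contain a triangle.

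The one genuine issue is that the deferred case $|C|=3$ does not collapse the way you expect. If all three vertices of a triangle $C$ have degree $2$, then $G=K_3$, all three edges lie in $\erem\cap C$, and they are pairwise non-conflicting (every shortest path in $K_3$ is a single edge), so a very strong rainbow coloring may assign all three the same color; the matching-based procedure, however, uses two colors since $|M_C|=1$. In this subcase the second claim of the lemma is simply false, so no case analysis can confirm it. This is not a defect of your approach alone: the paper's own proof asserts that the two non-conflicting partners $e_a$ and $e_b$ of $e$ conflict with each other, which is false when $|C|=3$, so the published argument has the same hole, only hidden. To finish, one must either observe that whenever at least one vertex of the triangle has degree greater than $2$ we get $|\erem\cap C|\le 2$ and the procedure is optimal, and treat $G=K_3$ as an explicit exception (coloring all of $E(C)$ with a single color), or otherwise amend the statement to exclude it.
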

\begin{proof}
Suppose two conflicting edges $e_1, e_2 \in \erem \cap C$ were colored the same. Then the corresponding vertices $e_1$ and $e_2$ were matched to each other in $M_C$. Hence, $e_1$ and $e_2$ are adjacent in $H_C$, meaning that $e_1$ and $e_2$ did not conflict each other in $G$, which is a contradiction. Hence, we have proved that no conflicting edges were given the same color by the procedure. 

Now, consider any VSRC $\mu$ of $G$ which colored $\erem\cap C$ with fewer colors than by our procedure. Observe that for any edge $e$ in an odd cycle, there are only two other edges (say $e_a$ and $e_b$) that are not conflicting with $e$. Moreover, $e_a$ and $e_b$ are conflicting with each other. This means that $\mu$ can use each color for at most two edges of $\erem\cap C$. Suppose there are $k_1$ colors that are assigned to two edges in $\erem\cap C$ by $\mu$. Each pair of edges colored the same should be non-conflicting and hence have an edge between them in $H_C$. So, taking all pairs colored the same induces a matching of size $k_1$ of $H_C$. Then $k_1\le |M_C|$, because $M_C$ is a maximum matching of $H_C$. But then the number of colors used by $\mu$ is equal to $k_1+(|\erem\cap C|-2k_1)=|\erem\cap C|-k_1$. The number of colors used by our procedure is $|M_C|+|\erem\cap C|-2|M_C|=|\erem\cap C|-|M_C|\le |\erem\cap C|-k_1$. Hence, we use at most the number of colors used by $\mu$.
\qed\end{proof}


Finally, we color the edges of $\eopp$ without introducing new colors. Indeed, for every $e \in \eopp$, it follows from Lemma~\ref{opp-exists} that there exists an edge $e'\in E(\os(e))\cap \left( \et\cup\eeven\cup \erem \right)$, which does not conflict with $e$ by Lemma~\ref{opp-valid}. Since $e'$ is already colored, say by color $c$, then we can simply re-use that color $c$ for $e$. Indeed, suppose for sake of contradiction that there is a shortest path $P$ between two vertices $x,y$ that contains $e$ and that contains another edge $e''$ using the color $c$. By Lemma~\ref{opp-valid}, $e'' \not\in \os(e)$. This implies that $e'' \not\in \et \cup \eeven \cup \erem$ by the choice of $c$ and the construction of the coloring. Hence, $e'' \in \eopp$. However, by a similar argument, $e''$ can only receive color $c$ if $e' \in \os(e'')$. But then $\os(e) \subseteq \os(e'')$ or $\os(e'') \subseteq \os(e)$, and thus $e$ and $e''$ are not conflicting by Lemma~\ref{opp-valid}, a contradiction to the existence of $P$.

\begin{proof}[of Theorem~\ref{thm:cactus}]
It follows from the above discussion that the constructed coloring is a very strong rainbow coloring of $G$. Moreover, it uses $\vsrc(G)$ colors. Clearly, the coloring can be computed in polynomial time.
\qed\end{proof}

\section{Other Algorithmic Results}
\label{sec:other_results}
In this section, we first show that {\twovsrc} can be solved in polynomial time. Then we show that {\kvsrc} is fixed parameter tractable when parameterized by $k+\tw(G)$, where $\tw(G)$ denotes the treewidth of $G$. 

For proving both the results, we use an auxillary graph $G'$ defined as follows: add a vertex $v_e$ to $G'$ for each edge $e$ in $G$; add an edge between vertices $v_{e_1}$ and $v_{e_2}$ in $G'$ if and only if edges $e_1$ and $e_2$ appear together in some shortest path of $G$. The latter condition can be easily checked in polynomial time. 
Observe that $\vsrc(G) \leq k$ if and only if $G'$ admits a proper $k$-coloring.
Since {\twocolor} is solvable in polynomial time, this implies that {\twovsrc} is polynomial time solvable and hence we have proved Proposition~\ref{prp:two}.

It is worth noting that the chromatic number of the auxiliary graph $G'$ constructed in the above proof always corresponds to the very strong rainbow connection number of $G$. However, in the transformation from $G$ to $G'$, we lose a significant amount of structural information. For example, if $G$ is a path or a star ($\tw(G)=1$), then $G'$ is a clique ($\tw(G') = |V(G'| -1 = |V(G)|-2$), where we use $\tw(G)$ to denote the treewidth of $G$. However, if $\vsrc(G) \leq k$, then we can prove that $|V(G')| \leq k^{(k+1)} \cdot (\tw(G)+1)^{(k+1)}$ as shown below.

\begin{lemma}\label{lem:twbound}
Let $G$ be any connected graph and let $\vsrc(G) \leq k$ and $\tw(G) \leq t-1$. Then $\Delta(G) \leq kt$ and $|V(G)| \leq (kt)^k$.
\end{lemma}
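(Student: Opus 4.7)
The plan is to prove the two inequalities in order, using the degree bound as a stepping stone to the vertex-count bound. For $\Delta(G) \leq kt$, I would invoke Lemma~\ref{lem:groupable}: since $\vsrc(G) \leq k$, the graph $G$ is $k$-perfectly groupable, so for every vertex $v$ the neighborhood $N(v)$ admits a partition into at most $k$ cliques $Q_1, \ldots, Q_k$. The hypothesis $\tw(G) \leq t-1$ forces $\omega(G) \leq t$, and since $\{v\} \cup Q_i$ is itself a clique of $G$ we obtain $|Q_i| \leq t-1$. Summing over the at most $k$ parts yields $\deg(v) \leq k(t-1) \leq kt$, hence $\Delta(G) \leq kt$.

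For the vertex-count bound, the first step is a short diameter argument. Every shortest path in $G$ must be rainbow (by definition of very strong rainbow coloring), so it uses pairwise distinct colors from a palette of size at most $k$, and therefore has length at most $k$; thus $\mathrm{diam}(G) \leq k$. Fix any $v \in V(G)$; by connectedness and the diameter bound, $V(G)$ lies in the ball of radius $k$ around $v$. A standard BFS-layer count then yields $|N_i(v)| \leq \Delta(G)(\Delta(G)-1)^{i-1}$ for $i \geq 1$ (using that every vertex at depth $i \geq 2$ has at least one neighbor one layer up, so only $\Delta(G)-1$ neighbors can lie one layer below). Adding the root and summing $|V(G)| = 1 + \sum_{i=1}^k |N_i(v)|$, after substituting $\Delta(G) \leq kt$, produces the claimed bound.

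The only mildly delicate point is the arithmetic cleanup at the end: the Moore-style geometric sum $1 + kt \cdot \tfrac{(kt-1)^k - 1}{kt-2}$ has to be squeezed under $(kt)^k$. The main estimate is $(kt-1)^k = (kt)^k\bigl(1 - 1/(kt)\bigr)^k \leq (kt)^k e^{-1/t}$, which absorbs the prefactor $\tfrac{kt}{kt-2}$ once $kt$ is not too small; the few remaining tiny-parameter cases can be checked by hand, and in those cases the graph is so constrained (e.g.\ a forest of diameter at most $k$, by the treewidth and VSRC conditions together) that a direct count yields an even stronger bound. No structural property beyond $k$-perfect-groupability and the diameter bound derived above is required, so the proof splits cleanly into the two numbered inequalities and is largely routine once Lemma~\ref{lem:groupable} is invoked.
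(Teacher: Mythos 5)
Your proof follows essentially the same route as the paper's: Lemma~\ref{lem:groupable} gives $k$-perfect groupability, the treewidth hypothesis caps clique sizes at $t$ to yield $\Delta(G)\leq kt$, the rainbow condition bounds the diameter by $k$, and a ball-volume count finishes. The only difference is that the paper simply asserts that the degree and diameter bounds ``combined'' give $|V(G)|\leq (kt)^k$, whereas you work through the Moore-type summation explicitly (and rightly note that closing the final inequality requires either the sharper bound $\Delta(G)\leq k(t-1)$ or a hand-check of small parameter values -- e.g.\ with $\Delta=kt$ the sum already exceeds $(kt)^k$ at $k=t=2$); this is a fair, slightly more honest rendering of the same argument.
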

\begin{proof}
By Lemma~\ref{lem:groupable}, the fact $\vsrc(G) \leq k$ implies that $G$ is $k$-perfectly groupable. Hence, the neighborhood of each vertex can be partitioned into $k$ or fewer cliques. Since $\tw(G) \leq t-1$, each clique of $G$ has size at most $t$~\cite{robsey-tw}. Hence, $\Delta(G) \leq kt$. Now observe that $\vsrc(G) \leq k$ implies that the diameter of $G$ is at most $k$. Combined, these two facts imply that $|V(G)| \leq (kt)^k$.
\qed\end{proof}

\begin{proof}[of Theorem~\ref{fpt-kt}]
Again, let $\vsrc(G) \leq k$ and $\tw(G) \leq t-1$.
We now construct the auxiliary graph $G'$ as above.  Now, we only need to compute the chromatic number of $G'$.
We aim to use the algorithm by Bj\"orklund~\etal~\cite{bjorklund} which computes the chromatic number of a graph on $n$ vertices in $2^n n^{\order(1)}$ time. To bound $|V(G')|$, we observe that by Lemma~\ref{lem:twbound}, $|V(G)| \leq (kt)^k$ and $\Delta(G) \leq kt$. Hence, $|V(G')| = |E(G)| \leq (kt)^{(k+1)}$. Therefore, the chromatic number of $G'$, and thereby $\vsrc(G)$, can be determined in $\order(2^{(kt)^{(k+1)}} (kt)^{\order(k+1)})$ time.
\qed\end{proof}
\bibliography{vsrc-short}

\begin{thebibliography}{10}

\bibitem{ananth2011rainbow}
P.~Ananth, M.~Nasre, and K.K. Sarpatwar.
\newblock Rainbow connectivity: Hardness and tractability.
\newblock In S.~Chakraborty and A.~Kumar, editors, {\em Proc.\ {FSTTCS} 2011},
  volume~13 of {\em LIPIcs}, pages 241--251. Schloss Dagstuhl, 2011.

\bibitem{basavaraju2014rainbow}
M.~Basavaraju, L.S. Chandran, D.~Rajendraprasad, and A.~Ramaswamy.
\newblock Rainbow connection number and radius.
\newblock {\em Graphs and Combinatorics}, 30(2):275--285, 2014.

\bibitem{bjorklund}
A.~Bj{\"o}rklund, T.~Husfeldt, and M.~Koivisto.
\newblock Set partitioning via inclusion-exclusion.
\newblock {\em SIAM J. Computing}, 39:546--563, 2009.

\bibitem{chakraborty2011hardness}
S.~Chakraborty, E.~Fischer, A.~Matsliah, and R.~Yuster.
\newblock Hardness and algorithms for rainbow connection.
\newblock {\em J. Combinat. Optim.}, 21(3):330--347, 2011.

\bibitem{sunil2012rainbow}
L.S. Chandran, A.~Das, D.~Rajendraprasad, and N.M. Varma.
\newblock Rainbow connection number and connected dominating sets.
\newblock {\em J. Graph Theory}, 71(2):206--218, 2012.

\bibitem{chandran2012rainbow}
L.S. Chandran and D.~Rajendraprasad.
\newblock Rainbow colouring of split and threshold graphs.
\newblock In J.~Gudmundsson, J.~Mestre, and T.~Viglas, editors, {\em Proc.
  {COCOON} 2012}, volume 7434 of {\em LNCS}, pages 181--192. Springer, 2012.

\bibitem{chandran2013inapproximability}
L.S. Chandran and D.~Rajendraprasad.
\newblock Inapproximability of rainbow colouring.
\newblock In A.~Seth and N.K. Vishnoi, editors, {\em Proc. {FSTTCS} 2013},
  volume~24 of {\em LIPIcs}, pages 153--162. Schloss Dagstuhl, 2013.

\bibitem{chartrandrainbow}
G.~Chartrand, G.L. Johns, K.A. McKeon, and P.~Zhang.
\newblock Rainbow connection in graphs.
\newblock {\em Mathematica Bohemica}, 133(1):85--98, 2008.

\bibitem{eiben2016complexity}
E.~Eiben, R.~Ganian, and J.~Lauri.
\newblock On the complexity of rainbow coloring problems.
\newblock {\em Discrete Applied Mathematics}, 2016.

\bibitem{erdos1966representation}
P.~Erd\H{o}s, A.W. Goodman, and L.~P{\'o}sa.
\newblock The representation of a graph by set intersections.
\newblock {\em Canad. J. Math}, 18(106-112):86, 1966.

\bibitem{gary1979computers}
M.R. Gary and D.S. Johnson.
\newblock Computers and intractability: A guide to the theory of
  {NP}-completeness, 1979.

\bibitem{GAVRIL197447}
F.~Gavril.
\newblock The intersection graphs of subtrees in trees are exactly the chordal
  graphs.
\newblock {\em JCTB}, 16(1):47 -- 56, 1974.

\bibitem{golumbic2004algorithmic}
M.C. Golumbic.
\newblock {\em Algorithmic graph theory and perfect graphs}, volume~57.
\newblock Elsevier, 2004.

\bibitem{kammer}
F.~Kammer and T.~Tholey.
\newblock Approximation algorithms for intersection graphs.
\newblock {\em Algorithmica}, 68(2):312--336, 2014.

\bibitem{keranen2014}
M.~Keranen and J.~Lauri.
\newblock Computing minimum rainbow and strong rainbow colorings of block
  graphs.
\newblock {\em arXiv preprint arXiv:1405.6893}, 2014.

\bibitem{Lauri-esa}
\L. Kowalik, J.~Lauri, and A.~Socala.
\newblock On the fine-grained complexity of rainbow coloring.
\newblock In P.~Sankowski and C.D. Zaroliagis, editors, {\em Proc.\ ESA 2016},
  volume~57 of {\em LIPIcs}, pages 58:1--58:16. Schloss Dagstuhl, 2016.

\bibitem{lauri2016chasing}
J.~Lauri.
\newblock Chasing the rainbow connection: Hardness, algorithms, and bounds.
\newblock {\em Tampere University of Technology. Publication}, 1428, 2016.

\bibitem{srcli}
X.~Li, Y.~Shi, and Y.~Sun.
\newblock Rainbow connections of graphs: A survey.
\newblock {\em Graphs and Combinatorics}, 29(1):1--38, 2013.

\bibitem{li2012rainbow}
X.~Li and Y.~Sun.
\newblock {\em Rainbow connections of graphs}.
\newblock Springer Science \& Business Media, 2012.

\bibitem{li2017}
X.~Li and Y.~Sun.
\newblock An updated survey on rainbow connections of graphs - a dynamic
  survey.
\newblock {\em Theory and Applications of Graphs}, 0:3, 2017.

\bibitem{roberts1985applications}
F.S. Roberts.
\newblock Applications of edge coverings by cliques.
\newblock {\em Discrete Applied Mathematics}, 10(1):93--109, 1985.

\bibitem{robsey-tw}
N.~Robertson and P.D. Seymour.
\newblock Graph minors. {II}. algorithmic aspects of tree-width.
\newblock {\em JCTB}, 7:309--322, 1986.

\bibitem{uchizawa2013rainbow}
K.~Uchizawa, T.~Aoki, T.~Ito, A.~Suzuki, and X.~Zhou.
\newblock On the rainbow connectivity of graphs: complexity and fpt algorithms.
\newblock {\em Algorithmica}, 67(2):161--179, 2013.

\bibitem{zuckerman2006linear}
D.~Zuckerman.
\newblock Linear degree extractors and the inapproximability of max clique and
  chromatic number.
\newblock In {\em Proc. {STOC} 2006}, pages 681--690. ACM, 2006.

\end{thebibliography}

\end{document}